\newtheorem{proposition}{Proposition}
\newtheorem{note}{Note}
\begin{document}
%
\title{Sparse Signal Detection with Compressive Measurements via Partial Support Set Estimation}

\author{\authorblockA{Thakshila Wimalajeewa  \emph{Member, IEEE}, and Pramod K.
Varshney, \emph{Fellow IEEE}\\
Department of EECS, Syracuse University, Syracuse, NY.
Email: \{twwewelw, varshney\}@syr.edu}
}

\maketitle\thispagestyle{empty}

\begin{abstract}
In this paper, we consider the problem of  sparse signal detection   based on   partial support set estimation with compressive measurements in a distributed network. Multiple  nodes in the network are assumed to observe  sparse signals which share  a common but unknown support.    While in the traditional  compressive sensing (CS) framework, the goal is to recover the complete  sparse signal, in sparse signal detection, complete signal recovery may not be necessary to make a reliable detection decision. In particular,  detection can be performed based on  partially or inaccurately estimated signals which requires less computational burden than that is required for complete signal recovery. To that end, we investigate the problem of sparse signal detection based on  partially  estimated   support set. First, we discuss how to determine  the minimum  fraction of the support set  to be known  so that a desired detection performance  is achieved in a centralized  setting. Second, we develop two distributed algorithms for sparse signal detection when the raw compressed observations are not available at the central fusion center. In these algorithms, the final decision  statistic is computed based on locally estimated partial support sets   via orthogonal matching pursuit (OMP) at individual nodes. The proposed distributed algorithms with  less communication overhead are shown to provide comparable  performance (sometimes better) to the centralized approach when the size of the estimated partial support set is very small.
\end{abstract}

\begin{IEEEkeywords}
Compressive sensing, sparse signal detection, partial support set estimation, orthogonal matching pursuit (OMP)
\end{IEEEkeywords}

\footnotetext[1]{This work is partly  supported by the National Science
Foundation (NSF) under Grant No. 1307775.}

\IEEEpeerreviewmaketitle

\section{Introduction}
Sparsity is one of the low dimensional structures exhibited in many signals of interest including audio, video and radar signals. A signal is said to be sparse if the coefficient vector, when represented in a known (orthogonal) basis,  contains only a few significant elements. Sparsity has  been exploited in signal processing and approximation theory
for tasks such as compression, denoising, model selection, and image processing  \cite{Davenport_book2012} for a long time. The problem of sparse signal recovery has attracted much attention in the recent literature with advancements of the theory of compressive sensing (CS). In the CS framework,  a sparse signal can be reliably recovered with a small number of random projections under certain conditions  \cite{candes1,candes2,donoho1,candes_TIT1,Eldar_B1,Baraniuk4}.

In addition to complete recovery, the problem of detecting signals which are sparse   is  important  in many applications including sensor,   cognitive radio, and radar  networks. For this problem, complete signal recovery is not necessary to make a reliable detection decision. Theories and concepts developed in CS for sparse signal recovery have  been exploited in the recent literature for signal detection problems  \cite{duarte_ICASSP06,
haupt_ICASSP07,Wang_ICASSP08,meng_CISS09,davenport_JSTSP10,Wimalajeewa_asilomar10,
Zahedi_Phys12,Wimalajeewa_ICASSP13,Gang_globalsip14,Bhavya_cscps14,Bhavya_asilomar14, Zheng_icc11,Rao_icassp2012,Cao_Info2014,Kailkhura_WCL16,Kailkhura_TSP16}. These works include deriving  performance bounds for CS based signal detection \cite{haupt_ICASSP07,Wang_ICASSP08,davenport_JSTSP10,
Wimalajeewa_asilomar10,Bhavya_cscps14,Bhavya_asilomar14,Rao_icassp2012,Cao_Info2014,Kailkhura_TSP16}, developing algorithms \cite{duarte_ICASSP06,Wimalajeewa_ICASSP13,Gang_globalsip14} and designing low dimensional projection matrices  \cite{Zahedi_Phys12,Kailkhura_WCL16}.
 While some of the work, such as \cite{duarte_ICASSP06,haupt_ICASSP07,Wimalajeewa_ICASSP13,Gang_globalsip14,Zheng_icc11,Rao_icassp2012,Cao_Info2014} focused  on sparse signal detection when  the underlying subspace where the  signal lies is unknown, some other works \cite{davenport_JSTSP10,Wimalajeewa_asilomar10,Zahedi_Phys12,Bhavya_cscps14,Bhavya_asilomar14} considered  the problem of detecting  signals which are not necessarily sparse.
 In CS based  sparse signal detection, the main objective is to utilize a small   number of measurements  to extract decision  statistics  to make a reliable detection decision.  For example,  in  \cite{haupt_ICASSP07}, the authors consider the case where an estimate of the sparse signal is obtained using some additional information to implement the matched filter for detection. In \cite{duarte_ICASSP06}, the authors consider a greedy algorithm developed for sparse signal reconstruction  where the detection decision is made after the maximum absolute  coefficient of the partially  estimated sparse signal exceeds a certain threshold. In this case, if the maximum component  of the estimated signal exceeds the required threshold during  an early iteration, complete signal recovery is not  necessary.

Without completely reconstructing the signal, one possible way to perform reliable detection is to construct a decision statistic by estimating  only a fraction of the support  set of a sparse signal.  Using   greedy approaches  as considered in \cite{duarte_ICASSP06},  a  partial support set can be   computed with  fewer  number of iterations (thus less computational power and time),  than that is required for complete signal recovery. Then, a decision statistic  is  computed based on the corresponding nonzero coefficients obtained via least squares  estimation. When the requirement is to estimate only a fraction of the support,  different approaches developed for partial support set estimation can be used with fewer  compressive measurements  \cite{Reeves_ISIT08}. On the other  hand, with multiple sensors each observing sparse signals with joint sparse structures, the   detection performance can be enhanced by proper fusion of partially estimated support sets at individual nodes. Motivated by these, in this paper, we investigate  the problem of sparse signal detection based on partially estimated support sets  with multiple sensors in centralized as well as distributed settings.

We consider a distributed network in which the sparse signals observed by multiple nodes share the same sparsity pattern.  Each node makes CS based measurements. While sparse signal (or the complete support of  the sparse signal) recovery with compressed measurements  has been investigated quite extensively with  multiple sensors with a common sparse support model \cite{Tropp_P12006,Tropp_P22006,Cotter1,Chen2,Obozinski2,Wipf2,Eldar4,Eldar1,
ling1,Zeng1,Bazerque1,Ling2,Wimalajeewa_ICASSP13,Rabbat1,Patterson1},  the problem of sparse signal detection  with CS based measurements  in a multiple sensor setting has not been investigated adequately. Recently, our work presented in \cite{Wimalajeewa_ICASSP13,Gang_globalsip14},   extended the decentralized algorithms developed for joint sparsity pattern recovery  based on  greedy techniques to the case of  sparse signal detection in a decentralized manner. In \cite{Wimalajeewa_ICASSP13}, a heuristic decision statistic based on support set indices estimated at multiple nodes via orthogonal matching pursuit (OMP) was computed in a decentralized manner. This  approach counts the number of nodes estimating the same index at a given iteration of OMP, therefore,  it is promising only when the network has sufficient number of nodes or the signal-to-noise-ratio (SNR) is relatively large.  In \cite{Gang_globalsip14}, a  similar approach  has been considered using the subspace pursuit (SP) algorithm.    The  current work is  different from \cite{Wimalajeewa_ICASSP13,Gang_globalsip14}  in terms of the  computation of decision statistics and the communication architectures  used.

In this paper, we consider the problem in which  the  detection decision is made based on the knowledge of a  partial support set.  First, we discuss how to obtain  the minimum fraction of the support set to make a detection decision with a desired performance level in a centralized setting. Second, we extend the OMP algorithm in  centralized and distributed settings  to perform joint partial support set  estimation and sparse signal detection. Note that,  in the centralized setting all the compressed observations are transmitted to the fusion center. In the distributed setting, two approaches are considered in which local decision statistics are computed based on  partial support set estimates  at the individual nodes. In the  first distributed algorithm, based on independently estimated partial support sets, a local decision statistic is computed and transmitted to the fusion center. In the second approach, independently estimated partial support sets are fused at the fusion center  to obtain an  updated support set of larger size which are fed back to the nodes. In this case, it is possible to obtain  the  complete support set at the fusion center   under certain conditions.  Then, a decision statistic computed based on the updated support set is transmitted to the fusion center    to compute  the final decision statistic. The two distributed algorithms differ from each other  in terms of  the communication overhead required. Another interesting observation is that, under certain conditions, the two distributed algorithms (with less communication burden) perform better  than the centralized algorithm which requires a higher communication burden.

\subsection*{Organization of the paper}
The paper is organized as follows. In Section \ref{sec_problem}, the sparse signal detection problem with compressive measurements in a distributed network  is formulated. The minimum fraction of the support set required to be known in order to achieve a desired detection  performance in a centralized setting is derived  in Section \ref{section_knownsupport}.  Several practical algorithms based on OMP to perform sparse signal detection by partial support set estimation in centralized as well as in distributed settings   are proposed  in Section \ref{sec_OMP_detection}. In Section \ref{sec_simulation}, numerical results are presented to show the effectiveness of the proposed algorithms. Finally, concluding remarks are given in Section \ref{sec_conclusion}.

\subsection*{Notation}
The following notation and terminology are  used throughout the paper.
Scalars are denoted by lower case letters and symbols; e.g., $x$ and $\alpha$. Lower case boldface letters are used to denote vectors; e.g., $\mathbf x$. Both upper case boldface letters and boldface symbols  are used to denote matrices, e.g.,  $\mathbf A$, $\boldsymbol\Phi$.  Matrix transpose is denoted by $\mathbf A^T$. The $l_p$ norm of a vector $\mathbf x$ is denoted by $||\mathbf x||_p$. Calligraphic letters are used to denote sets; e.g., $\mathcal U$. The notation $\mathcal U \setminus \mathcal V$ represents the set of elements in $\mathcal U$ which  are not in $\mathcal V$ when $\mathcal V \subseteq \mathcal U$.  By $\mathbf B(\mathcal U)$, we denote the submatrix of $\mathbf B$ with columns indexed in $\mathcal U$.  We use the notation $|.|$ to denote the absolute value of a scalar, as well as the cardinality of a set. We use $\mathbf I_N$ to denote the identity matrix of dimension $N$ (we avoid using subscript when there is no ambiguity) and $\mathbf 0$ to denote the vector  of all zeros with an appropriate dimension. The notation $\mathbf x \sim \mathcal N(\boldsymbol\mu, \boldsymbol\Sigma)$ denotes that the random vector  $\mathbf x$ is multivariate Gaussian with mean vector $\boldsymbol\mu$ and covariance matrix $\boldsymbol\Sigma$. The notation $x\sim \mathcal X_k^2$ denotes that the random variable $x$ is distributed as a chi-squared with $k$ degrees of freedom while  $x\sim \mathcal X_k^2(\lambda)$ denotes that $x$ has a non-central chi-squared distribution with non-centrality parameter $\lambda$.

\section{Sparse Signal Detection  with Multiple Sensors}\label{sec_problem}
\subsection{Observation model with uncompressed data}
Consider the problem of  detecting  unknown (deterministic)  sparse signals   in the presence of noise based on observations collected at multiple sensor nodes. Let there be $L$ nodes.  The observation model at the $j$-th node  under hypothesis   $\mathcal H_1$, (the signal is present) and  $\mathcal H_0$ (the signal is absent) is given by
\begin{eqnarray}
\mathcal H_1: ~ \mathbf x_j &=& \boldsymbol\theta_j + \mathbf v_j \nonumber\\
\mathcal H_0: ~ \mathbf x_j &=& \mathbf v_j\label{obs_uncomp}
\end{eqnarray}
where $\boldsymbol\theta_j$ is the signal observed  by the $j$-th node and $\mathbf v_j$ is the additive noise for $j=1, \cdots, L$.  The  signal $\boldsymbol\theta_j$,  is assumed to be  sparse in an orthonormal basis  $\boldsymbol\Psi_j$ so that  $\boldsymbol\theta_j = \boldsymbol\Psi_j \mathbf s_j$ where $\mathbf s_j$  is the  $N\times 1$ coefficient vector with only $k\ll N$ nonzero elements. We assume that all the signals are sparse in the same  basis so that $\boldsymbol\Psi_j = \boldsymbol\Psi$ and the coefficient vectors $\mathbf s_j$'s share the same sparse support. This particular joint sparse model  is applicable in sensor networks where multiple sensors  observe a signal that is sparse in the same basis, and the amplitudes of coefficients are different from each other due to different propagation conditions \cite{Baron1}.   We further assume that the sparsity index $k$ is known in advance. Finding the  sparsity index of a sparse signal and estimating the complete sparse signal (or the  support) are conceptually different and there are several algorithms developed to achieve the former \cite{Lopes_2013,Bioglio_2015}.    The noise vector $\mathbf v_j$ is  assumed to be Gaussian with $\mathbf v_j \sim \mathcal N(\mathbf 0, \sigma_v^2 \mathbf I_N)$  for $j=1,\cdots,L$.

\subsection{Compression via random projections}
Now assume that the observations at each node are compressed via a low dimensional random projection matrix. The compressed measurement vector at the $j$-th node is given by
\begin{eqnarray}
 \mathbf y_j &=& \mathbf A_j\mathbf x_j \label{obs_comp}
\end{eqnarray}
for $j=1,\cdots, L$ where    $\mathbf A_j$ is an  $M\times N$ ($M< N$) matrix. We assume that the elements of $\mathbf A_j$ are selected so that $\mathbf A_j$ is an orthoprojector; i.e.,  $\mathbf A_j\mathbf A_j^T  = \mathbf I_M$. Let $\mathbf B_j = \mathbf A_j \boldsymbol\Psi$. When $\boldsymbol\Psi$ is an orthonormal basis, we have  $\mathbf B_j\mathbf B_j^T  = \mathbf I_M$ for $j=1, \cdots, L$. The goal is to decide between hypotheses   $\mathcal H_1$ and  $\mathcal H_0$ based on (\ref{obs_comp}).

Let  $\mathcal U$  be the set which contains the indices of locations of nonzero   coefficients in  $\mathbf s_j$ which is defined by
$
\mathcal U := \{i\in \{1,\cdots,N\}~|~ \mathbf s_j (i) \neq 0 \}
$
where $\mathbf s_j (i)$ denotes the $i$-th element of $\mathbf s_j$ for $i=1,\cdots, N$ and $j=1,\cdots,L$. Then we have $k=|\mathcal U|$,  where $|\mathcal U|$ denotes the cardinality of   $\mathcal U$. It is noted that $\mathcal U$ is the same for all the signals $\mathbf s_j$ since we consider the  common sparse support model. Further let $\tilde{ \mathbf v}_j=\mathbf A_j \mathbf v_j$ where $\tilde{ \mathbf v}_j\sim\mathcal (\mathbf 0, \sigma_v^2 \mathbf I_M)$ when $\mathbf A\mathbf A^T = \mathbf I_M$. Then, the detection problem in the compressed domain can be expressed as:
\begin{eqnarray}
\mathcal H_1: ~ \mathbf y_j &=&   \mathbf B_j \mathbf s_j + \tilde{ \mathbf v}_j \nonumber\\
\mathcal H_0: ~ \mathbf y_j &=& \tilde{ \mathbf v}_j \label{det_comp}
\end{eqnarray}
for $j=1,\cdots, L$.
\begin{note}
It is worth noting that we consider the noiseless scenario in  (\ref{obs_comp}). If we consider the noisy scenario, the measurement model will be of  the form $\mathbf y _j =\mathbf A_j \mathbf x_j + \mathbf n_j$ with $\mathbf n_j\sim \mathcal N(\mathbf 0, \sigma_n^2)$, and   we end up with  the same detection problem as in (\ref{det_comp}) except that we will have    $\tilde{ \mathbf v}_j\sim (\mathbf 0, (\sigma_v^2+\sigma_n^2) \mathbf I)$. Thus,  the analysis in the remainder of the paper  will remain the same except that $\sigma_v^2$  will be replaced by $ \sigma_v^2+\sigma_n^2$. For simplicity of presentation, we will employ the noiseless measurement model. Results for noisy measurements can be obtained by changing the value of variance.
\end{note}

\subsection{Sparse signal detection when the common support of the sparse signals  is known}
When  $\mathcal U$ is exactly known, the detection problem in (\ref{det_comp}) reduces to
\begin{eqnarray}
\mathcal H_1: ~ \mathbf y_j &=& \mathbf B_j(\mathcal U)\mathbf s_j(\mathcal U) +\tilde{ \mathbf v}_j\nonumber\\
\mathcal H_0: ~ \mathbf y_j &=&\tilde{ \mathbf v}_j \label{obs_support_known}
\end{eqnarray}
for $j=1,\cdots,L$ where $ \mathbf B_j(\mathcal U)$ denotes the $M\times k$ submatrix of $\mathbf B_j$ in which columns are indexed by the ones in $\mathcal U$, and $\mathbf s_j(\mathcal U)$ is a $k\times 1$ vector containing nonzero elements in $\mathbf s_j$ indexed by $\mathcal U$ for $j=1,\cdots,L$.  When  $ \mathbf B_j(\mathcal U)$  is known, (\ref{obs_support_known}) is the subspace detection problem which has been addressed previously \cite{Scharf_TSP94,Scharf_ASAP03,jin1}. Depending on how the unknown coefficient vector $\mathbf s_j(\mathcal U)$ is modeled, different detectors have been  proposed. In  \cite{Scharf_TSP94}, a generalized likelihood ratio test (GLRT) based detector is proposed when $\mathbf s_j(\mathcal U)$ is assumed to be deterministic. In \cite{Scharf_ASAP03}, the analysis has been  extended to the case  when $\mathbf s_j(\mathcal U)$ is modeled as  random. The problem with multiple observation vectors  is addressed in \cite{jin1} where the authors have proposed adaptive subspace detectors   when the coefficients  $\{\mathbf s_j(\mathcal U)\}_{j=1}^L $ follow first and second order Gaussian models.

In the case of sparse signal detection, it is unlikely that the exact knowledge of $\mathcal U$ is available \emph{a priori}.  In other words,  sparse signal detection needs to be performed when  $\mathcal U$ is unknown. With the advancements of CS,  some algorithms have been developed to detect sparse signals based on  (\ref{det_comp}) exploiting  algorithms developed for sparse signal recovery \cite{haupt_ICASSP07,duarte_ICASSP06,Wimalajeewa_ICASSP13,Gang_globalsip14}.  In particular,  the standard OMP algorithm was modified in \cite{duarte_ICASSP06} to detect the presence of a sparse signal based on a single measurement vector. There, the detection decision is  obtained  after running a few iterations ($\leq k$) of the OMP algorithm. Let $\hat {\mathbf s}_{\mathcal U_{t_0}}$ be the estimated signal after running $t_0\leq k$ number of iterations. Then,  the decision statistic is taken to be the maximum absolute component of $\hat {\mathbf s}_{\mathcal U_{t_0}}$ ($||\hat {\mathbf s}_{\mathcal U_{t_0}}||_{\infty}$ where $||\cdot||_p$ denotes the $p$-norm) in \cite{duarte_ICASSP06}. In \cite{Wimalajeewa_ICASSP13}, a heuristic algorithm is proposed for sparse signal detection in a decentralized manner based on partial support set estimation via OMP  at individual nodes.

Unlike in complete sparse signal recovery,  in  sparse signal detection it is important to  focus on extracting   a decision statistic without completely reconstructing the signal.   To that end,  our goal is to explore the sparse  signal detection problem   with partially known/estimated   support sets.

\section{Sparse Signal Detection with Known Partial Support}\label{section_knownsupport}
Let us  assume that  the detection task is performed with the knowledge of a fraction of the support set of size  $T_0 < k$.  Let $\mathcal U(T_0)\subset \mathcal U$ denote the set containing known indices of the support set.   We assume  that the detection decision is made by comparing  the total power  of the compressed  signals projected on to the subspace spanned by the known subspace to  a threshold.
  More specifically, the decision statistic is given by,
  \begin{eqnarray}
 \Lambda = \sum_{j=1}^L || \mathbf P_{j,T_0}\mathbf y_j||_2^2,  \label{Lamda_1}
 \end{eqnarray}
where \\
$\mathbf P_{j,T_0}= \mathbf B_j({\mathcal U}(T_0)) \left(\mathbf B_j({\mathcal U}(T_0)) ^T \mathbf B_j({\mathcal U}(T_0)) \right)^{-1} \mathbf B_j({\mathcal U}(T_0)) ^T$ is the projection operator to the space spanned by $\mathbf B_j({\mathcal U}(T_0))$. When $T_0 = k$, the decision statistic in (\ref{Lamda_1}) is the same as the GLRT decision  statistic in  \cite{Scharf_ASAP03}.
Under the assumption that  $\mathbf A_j \mathbf A_j^T = \mathbf I$ for $j=1,\cdots,L$, $\Lambda$ in (\ref{Lamda_1}) is distributed  under $\mathcal H_0$ and $\mathcal H_1$ as:
 \begin{eqnarray*}
 \frac{\Lambda}{\sigma_v^2}| \mathcal H_0 &\sim& \mathcal X_{T_0 L }^2\nonumber\\
 \frac{\Lambda}{\sigma_v^2}| \mathcal H_1  &\sim& \mathcal X_{T_0 L}^2(\lambda_{T_0})
   \end{eqnarray*}
respectively,  with $\lambda_{T_0} = \sum_{j=1}^L \frac{||\mathbf P_{j,T_0} \mathbf B_j \mathbf s_j||_2^2}{\sigma_v^2}$.
Let the decision be made by comparing $\Lambda$ in (\ref{Lamda_1}) to a threshold $\tau_0$. Then, the probabilities  of false alarms and detection  can be expressed as \cite{poor1},
 \begin{eqnarray}
 P_f = 1 - F\left(\frac{T_0 L }{2}, \frac{\tau_0}{2\sigma_v^2}\right)\label{P_f_SOMP}
 \end{eqnarray}
 and
  \begin{eqnarray}
 P_d = Q_{\frac{T_0L}{2}} \left(\sqrt{\lambda_{T_0}}, \sqrt{\frac{\tau_0}{\sigma_v^2}}\right), \label{P_d_SOMP}
  \end{eqnarray}
 respectively, where $F(a,x) = \frac{\gamma(a,x)}{\Gamma(a)}$ is the regularized Gamma function,  $\Gamma(\cdot)$ is the Gamma function,  $\gamma(a,x)$ is the lower incomplete Gamma function given by $\gamma(a,x) =\int_0^x t^{a-1}e^{-t} dt $, and  $Q_M(a,b) = \int_b^\infty x\left(\frac{x}{a}\right)^{M-1} e^{-\frac{x^2+a^2}{2}} I_{M-1}(ax) dx$ is the Marcum Q function with $I_{M-1}$ being the modified Bessel function of order $M-1$. Defining the Rayleigh quotient of $\mathbf B_j \mathbf s_j$ with respect to $\mathbf P_{j,T_0}$, $\kappa_{j,T_0}$, as,
\begin{eqnarray}
\kappa_{j,T_0} =\frac{||\mathbf P_{j,T_0} \mathbf B_j \mathbf s_j||_2^2}{|| \mathbf B_j \mathbf s_j||_2^2}, \label{Rayleigh_2}
\end{eqnarray}
 $\lambda_{T_0}$ can be written as,
\begin{eqnarray}
\lambda_{T_0} = \sum_{j=1}^L \frac{|| \mathbf B_j \mathbf s_j||_2^2}{\sigma_v^2} \kappa_{j,T_0}. \label{lamda_t_def}
\end{eqnarray}
When the elements of $\mathbf A$ are random variables  with mean zero and $\mathbf A_j\mathbf A_j^T = \mathbf I$, we may approximate $|| \mathbf B_j \mathbf s_j||_2^2 \approx \frac{M}{N}|| \mathbf s_j||_2^2$. Defining $\gamma_j =  \frac{||  \mathbf s_j||_2^2}{\sigma_v^2} $, (\ref{lamda_t_def}) can be approximated by,
\begin{eqnarray}
\lambda_{T_0} \approx \frac{M}{N}\sum_{j=1}^L \gamma_j \kappa_{j,T_0}.\label{lamda_t_approx}
\end{eqnarray}

Note that,
\begin{eqnarray*}
||\mathbf P_{j,T_0} \mathbf B_j \mathbf s_j||_2^2= || \mathbf B_j \mathbf s_j||_2^2 - ||\mathbf P_{j,T_0}^{\bot} \mathbf B_j \mathbf s_j||_2^2
\end{eqnarray*}
where $\mathbf P_{j,T_0}^{\bot} = \mathbf I - \mathbf P_{j,T_0}$.
Thus, $\kappa_{j,T_0}$ in (\ref{Rayleigh_2}) can be written as,
\begin{eqnarray}
\kappa_{j,T_0} = 1 - \frac{||\mathbf P_{j,T_0}^{\bot} \mathbf B_j \mathbf s_j||_2^2}{||\mathbf B_j \mathbf s_j||_2^2}. \label{Rayleigh_quaotient}
\end{eqnarray}
The quantity $||\mathbf P_{j,T_0}^{\bot} \mathbf B_j \mathbf s_j||_2^2$ reflects the power  of the sampled signal unaccounted for by the subspace spanned by $\mathbf B_j({\mathcal U}(T_0))$. Thus, the impact of not having  the knowledge of the  complete support set of $\mathbf s_j$  on the detection performance is reflected by $||\mathbf P_{j,T_0}^{\bot} \mathbf B_j \mathbf s_j||_2^2$.
When  $T_0 = k$,   we have $||\mathbf P_{j,T_0}^{\bot} \mathbf B_j \mathbf s_j||_2^2 = 0$ and  $\kappa_{j,T_0} = 1$ and, therefore,  $\lambda_{k}  \approx \frac{M}{N}\sum_{j=1}^L \gamma_j $. Let $\tau_d$ be the desired probability of detection.  If  $Q_{\frac{kL}{2}} \left(\sqrt{\lambda_{k}}, \sqrt{\tau_0/\sigma_v^2}\right) < \tau_d$, the desired $P_d$ with the decision  statistic (\ref{Lamda_1}) cannot be achieved even if all the  indices of the common support are known correctly. Thus,  estimation of only a fraction is sufficient only if the desired  detection performance is such that  $\tau_d < Q_{\frac{kL}{2}} \left(\sqrt{\lambda_{k}}, \sqrt{\tau_0/\sigma_v^2}\right)$. In that case, it is of interest to determine the minimum fraction of the support to be known in order to achieve the desired detection performance which will be discussed next.

The  goal is to find the minimum value of $T_0$ in order to achieve a desired $P_d$ while maintaining  the probability of false alarm  $P_f$  under a desired value, say, $\alpha$.  Letting $t=T_0$, the desired optimization problem can be cast as,
 \begin{eqnarray}
&\min t ~ \mathrm{such} ~\mathrm{that}& \nonumber\\
& P_d(t) \geq \tau_d, P_f \leq \alpha ~ \mathrm{and} ~ 1 \leq t <  k, t\sim \mathrm{integers} &  \label{opt_OMP}
\end{eqnarray}
with
\begin{eqnarray}
P_d(t) = Q_{\frac{tL}{2}} \left(\sqrt{\lambda_{t}}, \sqrt{\tau_0/\sigma_v^2}\right) \label{Pd_marcm_Q}
 \end{eqnarray}
 where $\lambda_t $ is as given in (\ref{lamda_t_approx}). The term  $t$ is related to $\lambda_t$ via  $\kappa_{j,t} = 1 - \frac{||\mathbf P_{j,t}^{\bot} \mathbf B_j \mathbf s_j||_2^2}{||\mathbf B_j \mathbf s_j||_2^2}$. Let $\mathbf P_{j,t}^{\bot} = \mathbf U^T \boldsymbol\Pi \mathbf U$ where $\mathbf U$ is unitary and $\boldsymbol\Pi$ is a diagonal matrix with $t$ zero elements and $(M-t)$ ones. Then, we can write
\begin{eqnarray}
||\mathbf P_{j,t}^{\bot} \mathbf B_j \mathbf s_j||_2^2 &=& ||\mathbf P_{j,t}^{\bot} \mathbf B_{j,k\setminus t} \mathbf s_{j,k\setminus t} ||_2^2 \nonumber\\
&=&||\boldsymbol\Pi \mathbf U\mathbf B_{j,k\setminus t} \mathbf s_{j,k\setminus t} ||_2^2 \label{P_j_bot}
\end{eqnarray}
where $\mathbf B_{j,k\setminus t} = \mathbf B_j(\mathcal U\setminus \mathcal U(t))$ with $|\mathcal U|=k$ (similar definition holds for $\mathbf s_{j,k\setminus t}$). Whenever  $\mathbf B_j^T \mathbf B_j \approx  \frac{M}{N}\mathbf I$, we have   $\breve {\mathbf B}_j^T \breve {\mathbf B}_j \approx \frac{M}{N} \mathbf I$  since $\mathbf U$ is unitary  where $\breve {\mathbf B}_j = \mathbf U\mathbf B_j$.
Since $\boldsymbol\Pi$ has only $M-t$ ones, we may approximate (\ref{P_j_bot}) by,
\begin{eqnarray*}
||\boldsymbol\Pi \mathbf U\mathbf B_{j,k\setminus t} \mathbf s_{j,k\setminus t} ||_2^2 &=& ||\boldsymbol\Pi \breve {\mathbf B} _{j,k\setminus t} \mathbf s_{j,k\setminus t} ||_2^2 \nonumber\\
&\approx& \frac{M-t}{N} || \mathbf s_{j,k\setminus t} ||_2^2.
\end{eqnarray*}
Then, $\kappa_{j,t}$ can be approximated by
\begin{eqnarray}
\kappa_{j,t}\approx 1 - \frac{(M-t)}{M} \frac{|| \mathbf s_{j,k\setminus t} ||_2^2}{||\mathbf s_{j}||^2}. \label{kappa_2}
\end{eqnarray}
It is noted that  $|| \mathbf s_{j,k\setminus t} ||_2^2$ depends on  $t$ and the nonzero coefficients of the signal indexed by $\mathcal U\setminus\mathcal U(t)$.  Since the nonzero coefficients of $\mathbf s_j$ are unknown,  the  knowledge of $|| \mathbf s_{j,k\setminus t} ||_2^2$ for given $t$ is not available. Thus, in the following we solve the problem under certain assumptions regarding the  nonzero coefficients. We discuss the impact of relaxations of these assumptions in the numerical results section.

\subsubsection*{Nonzero coefficients do  not significantly deviate  from each other}
In the case where the nonzero coefficients do not deviate much from each other, we may approximate $\frac{|| \mathbf s_{j,k\setminus t} ||_2^2}{|| \mathbf s_{j} ||_2^2}\approx \frac{(k-t)}{k}$. In that case, (\ref{kappa_2}) can be approximated by
\begin{eqnarray}
\kappa_{j,t}\approx \frac{t}{k}\left(1+\frac{k-t}{M}\right). \label{kappa_3}
\end{eqnarray}
Then  we have
\begin{eqnarray}
{\lambda}_{t} \approx  \frac{Mt}{Nk}\left(1+\frac{(k-t)}{M}\right) \sum_{j=1}^L \gamma_j.\label{lamda_t}
\end{eqnarray}

With ${\lambda}_{t}$ as in (\ref{lamda_t}), we aim to solve (\ref{opt_OMP}). Note  that (\ref{opt_OMP}) is an integer programming problem with linear and nonlinear constraints. To further simplify the nonlinear constraint $P_d(t)\geq \tau_d$,   we use certain approximations for  $P_d$ and $P_f$  in (\ref{P_f_SOMP}) and (\ref{P_d_SOMP}), respectively,  exploiting  approximations of the tail probabilities of chi-squared  random variables.

The cumulative distribution function (cdf), $G(x,k)$,  of a chi-squared  random variable, $x\sim\mathcal X^2_k$, can be approximated by  $G(x,k) \approx 1- Q \left(\frac{\left(\frac{x}{k}\right)^{1/3} - \left(1-\frac{2}{9k}\right)}{\sqrt{\frac{2}{9k}}}\right)$ \cite{Sankaran_Bio63}, where $Q(x) = \frac{1}{\sqrt{2\pi}}\int_x^{\infty} e^{-\frac{t^2}{2}} dt$ is the Gaussian $Q$-function. Using this approximation for $P_f$, we have,
\begin{eqnarray*}
P_f \approx Q\left(\frac{\left(\frac{\tau_0/\sigma_v^2}{tL}\right)^{1/3} - \left( 1- \frac{2}{9tL}\right)}{\sqrt{\frac{2}{9tL}}}\right)
\end{eqnarray*}
and the threshold should be selected as
 \begin{eqnarray}
 \tau_0(t,\alpha) \approx \sigma_v^2 t L  \left(\left(1-\frac{2}{9tL }\right) + \sqrt{\frac{2}{9tL }} Q^{-1}(\alpha)\right)^3\label{tau_0_OMP}
 \end{eqnarray}
 to maintain $P_f \leq \alpha$.
 Similarly, the cdf of a non-central chi-squared random variable, $x\sim \mathcal X_k^2(\lambda)$, can be approximated by  $G(x;k,\lambda) \approx 1- Q\left(\frac{\left(\frac{x}{k+\lambda}\right)^h - (1+hp(h-1-0.5(2-h)mp))}{h\sqrt{2p}(1+0.5mp)}\right)$   where $h = 1 - \frac{2}{3} \frac{(k+\lambda)(k+3\lambda)}{(k+2\lambda)^2}$, $p=\frac{k+2\lambda}{(k+\lambda)^2}$,  and $m=(h-1)(1-3h)$  \cite{Sankaran_Bio63}. With this approximation,  $P_d$ can be written  as,
\begin{eqnarray}
P_d \approx Q\left( f(t, \alpha, {\boldsymbol \gamma})\right) \label{P_d_OMP}
\end{eqnarray}
with
\begin{eqnarray}
&~& f(t, \alpha,{\boldsymbol \gamma}) \nonumber\\
&=& \left(\frac{\left(\frac{\tau_0(t,\alpha)}{\sigma_v^2(t L+{\lambda}_t)}\right)^h - (1+hp(h-1-0.5(2-h)mp))}{h \sqrt{2p}(1+0.5 mp)}\right) \label{f_t}
\end{eqnarray}
where  $h=1-\frac{2}{3} \frac{(t L + {\lambda}_{t})(t L + 3{\lambda}_{t})}{(t L + 2{\lambda}_{t})^2}$, $p = \frac{t L + 2{\lambda}_{t}}{(t L + {\lambda}_{t})^2}$ and $m=(h-1)(1-3h)$, ${\boldsymbol \gamma} = [\gamma_1,\cdots,\gamma_L]$ and $\tau_0(t, \alpha)$ is approximated as in (\ref{tau_0_OMP}). It is noted that $\lambda_t$ in the right hand side of  (\ref{f_t}) is given by (\ref{lamda_t}) which is a function of $\boldsymbol\gamma$.

With the approximations  (\ref{lamda_t}),  (\ref{tau_0_OMP}) and   (\ref{P_d_OMP}),  the optimization problem in (\ref{opt_OMP}) can be rewritten as,
\begin{eqnarray}
&\min t ~\mathrm{such}~ \mathrm{that}~& \nonumber\\
& \tau_d - Q\left(f(t,\alpha, {\boldsymbol \gamma})\right)  \leq 0,&\nonumber\\
& ~t-k <  0  ~\mathrm{and } ~1 - t\leq 0, t \sim\mathrm{integers}. & \label{opt_3}
\end{eqnarray}
The theory
developed in the area of integer nonlinear programming  is much less mature
than integer linear programming \cite{Junger_2010}.  We aim to solve (\ref{opt_3}) based on rounding by relaxing the integer restriction. This is a heuristic method which is shown to be faster than most of the other algorithms developed for integer nonlinear programming \cite{Leyffer_PhD1993}.   After relaxing the integer restriction, the solution for $t$ which satisfies the  Karush-Kuhn-Tucker  (KKT) conditions is summarized in the following proposition.
\begin{proposition}\label{proposition}
Let $k>1$. The solution for $t$ in (\ref{opt_3}) which satisfies the KKT conditions is given by  (\ref{hat_u})
\begin{figure*}[!t]
\begin{eqnarray}
\hat t_{cont}=\left\{
\begin{array}{ccc}
1~&\mathrm{if} ~  \tau_d \leq  Q(f(1))\\
f^{-1}(Q^{-1}(\tau_d)) &~\mathrm{if}~ f'(t)|_{t=f^{-1}(Q^{-1}(\tau_d))} \leq 0 ~\& ~Q(f(1)) <  \tau_d < Q(f(k))  \\
\mathrm{infeasible} & \mathrm{otherwise}
\end{array}\right.\label{hat_u}
\end{eqnarray}
\end{figure*}
where $\hat t_{cont}$ is the continuous valued solution for $t$.
\end{proposition}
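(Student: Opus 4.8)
The plan is to treat (\ref{opt_3}) as a continuous nonlinear program once the integrality requirement is dropped, and to verify that the three-way expression in (\ref{hat_u}) exhausts the candidate KKT points. First I would observe that the false-alarm requirement $P_f \le \alpha$ has already been absorbed into the threshold choice $\tau_0(t,\alpha)$ in (\ref{tau_0_OMP}), which makes $P_f = \alpha$ hold identically; consequently $\tau_0$, $\lambda_t$, and hence $f(t,\alpha,\boldsymbol\gamma)$ are all functions of the single scalar $t$, and the only remaining inequality constraints are the detection constraint $g(t) := \tau_d - Q(f(t)) \le 0$ together with the box constraints $g_2(t) := t-k < 0$ and $g_3(t) := 1-t \le 0$. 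Writing the Lagrangian $L(t,\boldsymbol\mu) = t + \mu_1 g(t) + \mu_2(t-k) + \mu_3(1-t)$, the KKT system consists of stationarity $1 + \mu_1 g'(t) + \mu_2 - \mu_3 = 0$, complementary slackness $\mu_1 g(t)=0$, $\mu_2(t-k)=0$, $\mu_3(1-t)=0$, and primal/dual feasibility with $\mu_i \ge 0$.

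The central structural fact is that the objective gradient $\partial t / \partial t = 1$ never vanishes, so no interior stationary point exists and every KKT point must lie on the boundary of the feasible set; since we minimize $t$, the binding constraint is either the lower bound $t=1$ or the detection constraint. I would make this precise through the sign of $g'(t)$. Differentiating gives $Q'(x) = -\tfrac{1}{\sqrt{2\pi}} e^{-x^2/2} < 0$, so that $g'(t) = -Q'(f(t))\, f'(t)$ has the same sign as $f'(t)$; equivalently $P_d(t) = Q(f(t))$ is increasing in $t$ exactly where $f'(t) \le 0$, which is the physically expected behaviour that knowing more of the support cannot hurt detection. This identification is what ties the condition $f'(t)|_{t=f^{-1}(Q^{-1}(\tau_d))} \le 0$ in (\ref{hat_u}) directly to dual feasibility.

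With this in hand the case split follows. If $\tau_d \le Q(f(1))$, the detection constraint is already met at the smallest admissible value, so the lower bound is active: taking $\mu_1=0$ and, from stationarity, $\mu_3 = 1 \ge 0$, with $\mu_2 = 0$ because $1 < k$ under the hypothesis $k>1$, all KKT conditions hold at $t=1$. If instead $Q(f(1)) < \tau_d < Q(f(k))$, the lower bound cannot be active and, by monotonicity, the detection constraint binds at the interior point $t^\star = f^{-1}(Q^{-1}(\tau_d))$ where $Q(f(t^\star)) = \tau_d$; here $\mu_2 = \mu_3 = 0$ and stationarity forces $\mu_1 = -1/g'(t^\star)$, which is nonnegative precisely when $g'(t^\star) \le 0$, i.e. when $f'(t^\star) \le 0$. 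Thus the middle branch of (\ref{hat_u}) is a genuine KKT point exactly under the stated condition. Every remaining situation, namely $\tau_d \ge Q(f(k))$ (so the target is unachievable even with the full support and no feasible $t<k$ exists) or $f'(t^\star) > 0$ (so no admissible multiplier exists), is correctly labelled infeasible.

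The step I expect to be the main obstacle is justifying the interior branch rigorously, because it hinges on the monotone behaviour of the composite map $t \mapsto Q(f(t))$ and on the existence of the inverse $f^{-1}$. The function $f(t,\alpha,\boldsymbol\gamma)$ in (\ref{f_t}) depends on $t$ through $\tau_0(t,\alpha)$ and through $\lambda_t$ in (\ref{lamda_t}), and also implicitly through the Sankaran parameters $h$, $p$, $m$; disentangling these to show local invertibility and to pin down the sign of $f'$ is the delicate calculation. My intention is to avoid a global monotonicity claim, which need not hold across all parameter regimes, and instead to state the result conditionally exactly as (\ref{hat_u}) does: the middle branch is asserted only when the locally evaluated derivative satisfies $f'(t^\star) \le 0$, which is all that KKT stationarity and dual feasibility actually require, with the boundary cases $\tau_d = Q(f(1))$ and $\tau_d \to Q(f(k))$ handled by continuity.
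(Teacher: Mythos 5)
Your proposal is correct and follows essentially the same route as the paper's proof: form the Lagrangian with multipliers $\mu_1,\mu_2,\mu_3$, write the KKT stationarity and complementary-slackness conditions, use $\frac{\partial Q(f(t))}{\partial t} = -\frac{1}{\sqrt{2\pi}}e^{-f(t)^2/2}f'(t)$ to tie dual feasibility of $\mu_1$ to the sign of $f'$, and enumerate the active-constraint cases to arrive at the three branches of (\ref{hat_u}). The only difference is that you spell out explicitly the case analysis that the paper compresses into ``it can be verified that (\ref{kkt_1})--(\ref{kkt_3}) have a feasible solution only under the following cases,'' and you flag (as the paper does not) the reliance on local invertibility of $f$ for the middle branch.
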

\begin{proof}
For $\mu_1, \mu_2, \mu_3\geq 0$, the Lagrangian for  (\ref{opt_3}) is given by,
\begin{eqnarray*}
L(t, \mu) = t+ \mu_1(\tau_d - Q(f(t))) + \mu_2(t-k) + \mu_3(1-t)
\end{eqnarray*}
where we use $f(t)$ for  $f(t,\alpha, {\boldsymbol \gamma})$ for simplicity.
The KKT conditions are given by,
\begin{eqnarray}
1 - \mu_1 \frac{\partial Q(f(t))}{\partial t} + \mu_2-\mu_3 = 0\label{kkt_1}\\
\mu_1(\tau_d - Q(f(t))) = 0,
\mu_2 (t-k) = 0 ,
\mu_3(1-t) = 0 \label{kkt_2}\\
\tau_d - Q(f(t)) \leq 0, t-k <  0, 1-t \leq 0. \label{kkt_3}
\end{eqnarray}
Using the Leibnitz integral rule, $\frac{\partial Q(f(t))}{\partial t}$ can be computed as,
\begin{eqnarray}
\frac{\partial Q(f(t))}{\partial t} = - \frac{1}{\sqrt{2\pi}} e^{-\frac{f(t)^2}{2}} f'(t)\label{partial_fu}
\end{eqnarray}
where $f'(t)$ is the derivative of $f(t)$ with respect to $t$. With $\frac{\partial Q(f(t))}{\partial t}$ as in (\ref{partial_fu}), it can be verified that (\ref{kkt_1})-(\ref{kkt_3})  have a feasible solution only under the  following cases: (i). $\mu_1 = 0$, $\mu_2=0$ and $\mu_3 \neq 0$, (ii). $\mu_1 \neq 0$, $\mu_2=0$ and $\mu_3 = 0$, (iii). $\mu_1 \neq  0$, $\mu_2 = 0$ and $\mu_3 \neq 0$, and (iv). $\mu_1 \neq 0$, $\mu_2\neq 0$ and $\mu_3 =0$. Then,  $\hat t_{cont}$ which  satisfies (\ref{kkt_1})-(\ref{kkt_3}) can be obtained as in (\ref{hat_u}).
\end{proof}

The integer valued solution, $\hat t$,  is obtained  by rounding $\hat t_{cont}$ to the nearest integer.  From Proposition \ref{proposition}, it is observed that, under certain conditions,  i.e.,  when the parameters ${\boldsymbol\gamma}$, $\alpha$ and $\tau_d$ are such that $ \tau_d \leq  Q(f(1, \alpha,\boldsymbol{\gamma}))$, it is sufficient to know   only one support location  of the sparse signal correctly to reach the desired detection performance.  It is further interesting to investigate the infeasible case. In Fig. \ref{fig:fdasht}, we plot $f'(t)$ vs $t$. While  $f'(t)$ depends on several parameters,  we show the behavior of $f'(t)$  for a given set of  values for  $N$, $c_r=M/N$,  and $L$ as $k$, and $\sigma_v^2$ (and thus $\gamma_j$) vary. In order to compute $\gamma_j$, we generate the nonzero coefficients of $\mathbf s_j$'s from a uniform distribution in $[3,4]$. For the  sparse support,  $k$ indices are selected randomly (and uniformly)  from $[1,N]$. It is observed from Fig. \ref{fig:fdasht} that the infeasible case can be observed when the signal is less sparse (i.e. $k$ is large). As seen in Fig.  \ref{fig:fdasht} (b) after some value of $t$, $f'(t)$ becomes  positive. In other words, when $t$ exceeds this threshold,  $P_d(t)$ starts decreasing. When the  signal is sufficiently sparse (i.e. $k\ll N$), the  infeasible case is less likely to be observed.

\begin{figure*}
    \centering
    \begin{subfigure}[b]{0.45\textwidth}
        \includegraphics[width=\textwidth]{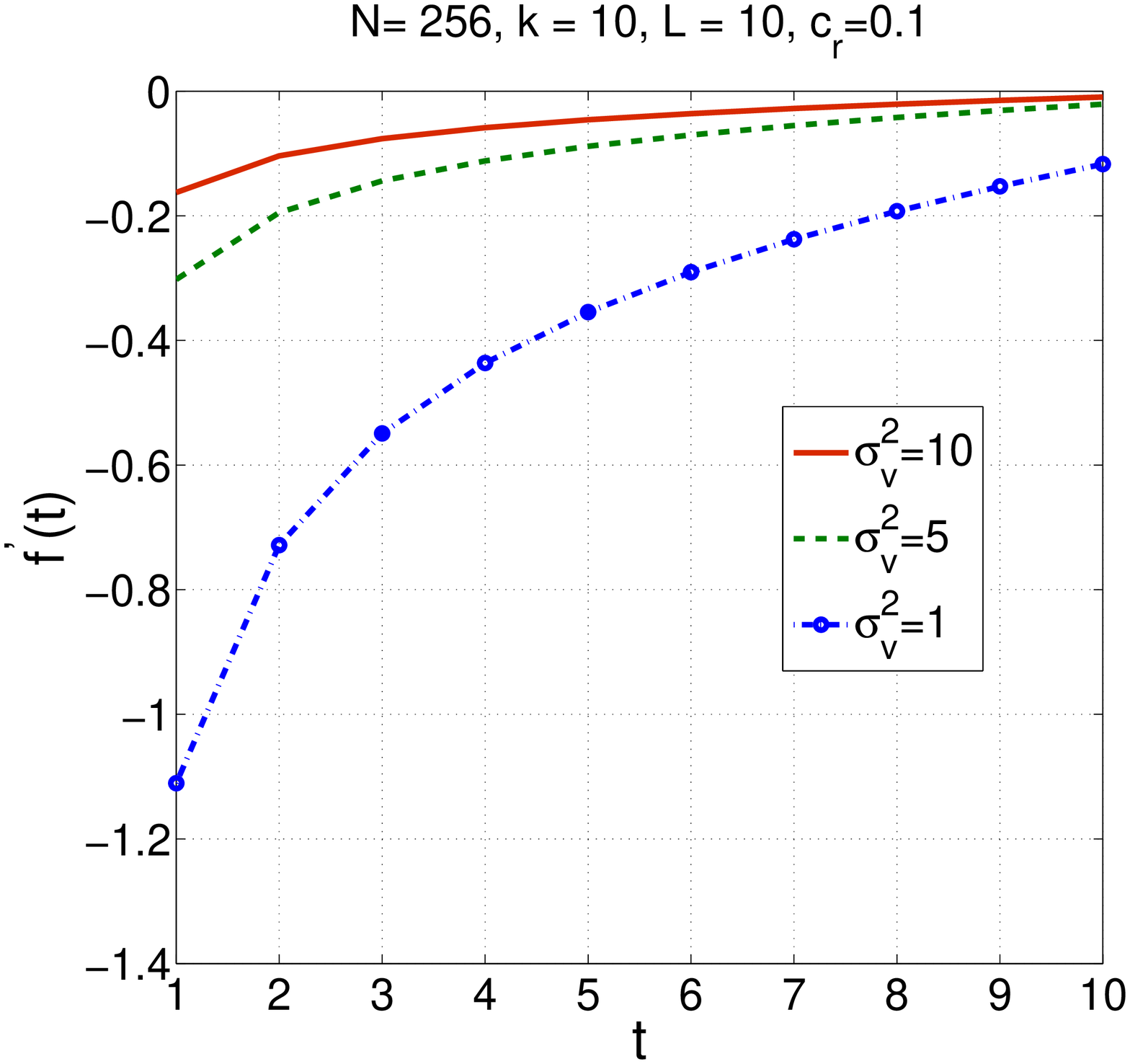}
        \caption{ $K=10$}
        \label{fig:gull}
    \end{subfigure}
 \begin{subfigure}[b]{0.45\textwidth}
        \includegraphics[width=\textwidth]{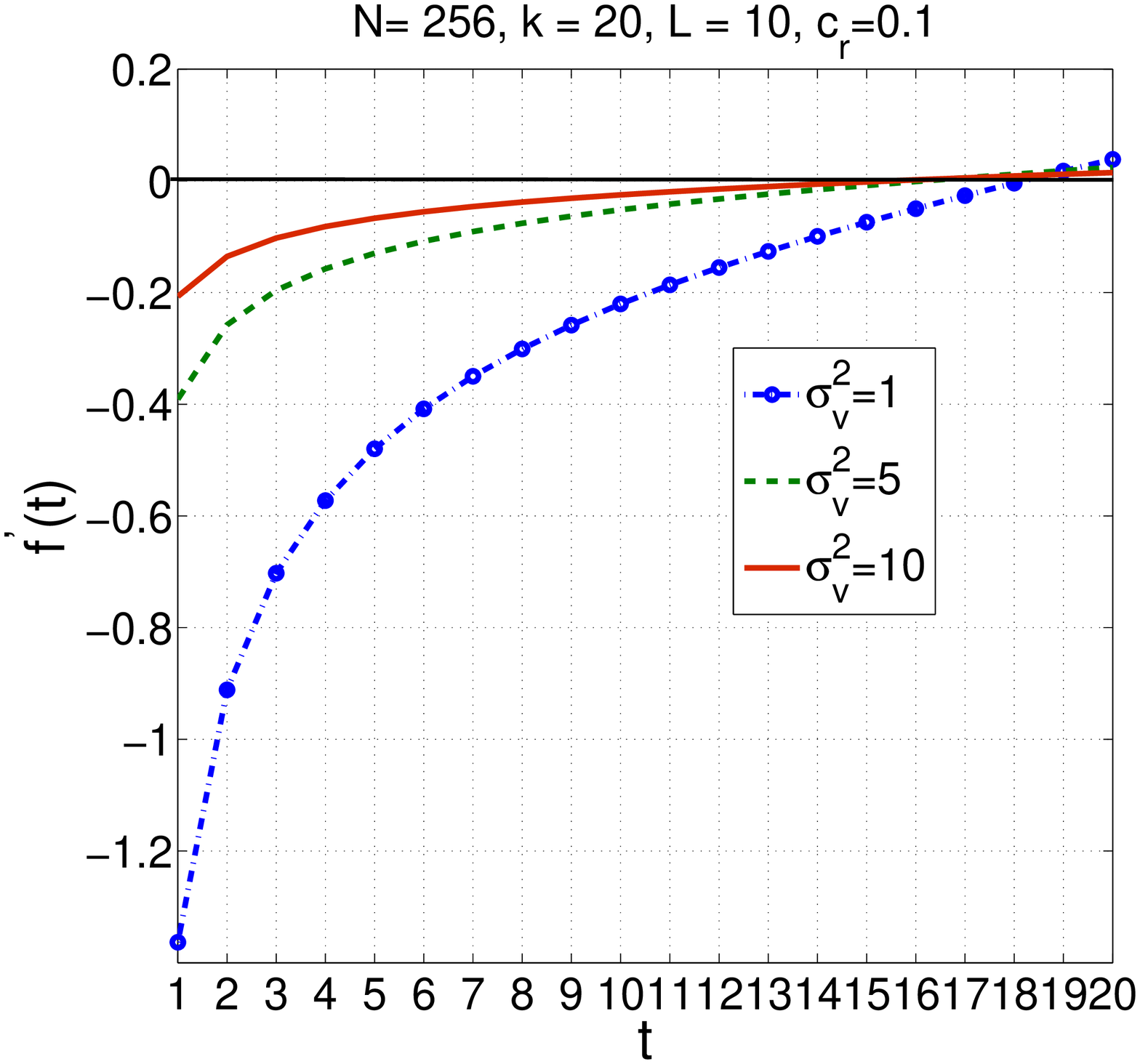}
        \caption{ $K=20$}
        \label{fig:tiger}
    \end{subfigure}
    \caption{Behavior of $f'(t)$ and $P_d(t)\approx Q(f(t))$; $N=256$, $L=10$, $c_r=0.1$}\label{fig:fdasht}
\end{figure*}

The above analysis provides insights on sparse signal detection when the  minimum size of the partial support set is computed under certain assumptions on the sparse signals. In practice, the desired  fraction of the support of the sparse signal   needs to be estimated based on  the available data when it is not  known  \emph{a priori} whether  the signal is present or not.   In the following section, we consider several extensions of the OMP algorithm to detect the presence of a sparse signal by estimating the partial support of given size  in centralized as well as distributed settings.

\section{OMP Based  Sparse Signal Detection via Partial Support Set  Estimation}\label{sec_OMP_detection}
OMP is a greedy algorithm developed for sparse signal recovery with a single measurement vector (SMV) \cite{tropp_OMP2007}. In Algorithm \ref{algo_OMP}, we state the standard OMP for sparse support set estimation (the coefficients can be estimated using least squares  estimation). It is noted that,  the subscripts of the vectors and matrices  corresponding to node index are dropped for brevity.
 \begin{algorithm}
\textbf{Inputs}: $\mathbf y$, $\mathbf B$, $k$\\
\textbf{Output}: An estimate for the support set, $\hat{\mathcal U}$
\begin{enumerate}
\item Initialize $t=1$, $\hat{\mathcal U}(0) = \emptyset $, residual vector $\mathbf r_{0} = \mathbf y$
\item Find the index $\beta(t)$ such that
$
\beta(t) = \underset{\omega = 1,\cdots,N}{\arg~ \max} ~ {|\langle\mathbf r_{t-1}, \mathbf B({\omega})\rangle|}
$
 \item  Set $\hat{\mathcal U}(t) = \hat{\mathcal U}(t-1) \cup \{\beta (t)\}$
\item  Compute the projection operator $\mathbf P(t) = \mathbf B(\hat{\mathcal U}(t)) \left(\mathbf B(\hat{\mathcal U}(t)) ^T \mathbf B(\hat{\mathcal U}(t)) \right)^{-1} \mathbf B(\hat{\mathcal U}(t)) ^T$. Update the residual vector:  $\mathbf r_{t} = (\mathbf I - \mathbf P(t))\mathbf y$ (note: $\mathbf B(\hat{\mathcal U}(t))$ denotes the submatrix of $\mathbf B$ in which columns are taken from $\mathbf B$ corresponding to the indices in $\hat{\mathcal U}(t)$)
     \item  Increment $t=t+1$ and go to step 2 if $t\leq k$, otherwise, stop and set $\hat{\mathcal U} = \hat{\mathcal U}(t-1)$

 \end{enumerate}
 \caption{Standard OMP  for sparse support set estimation with SMV}\label{algo_OMP}
 \end{algorithm}

The OMP algorithm was extended to the multiple measurement vector case in \cite{Tropp_P12006,Baron1} which is termed simultaneous OMP (S-OMP).
First,  we consider that all the compressed observations are available at the fusion  center. In  Algorithm \ref{algo_SOMP_det},   we extend  the S-OMP algorithm for sparse signal detection by first estimating a partial support set of size $T_0$ based on compressed observations  in (\ref{obs_comp}). This is a simple modification of  the S-OMP algorithm.
\begin{algorithm}
\textbf{Inputs}:  $\{\mathbf y_j  \}_{j=1}^{L}, \{\mathbf B\}_{j=1}^{L}$, $T_0$\\
\textbf{Outputs}: Partial support set estimate $\hat{\mathcal U}(T_0)$, Decision statistic $\Lambda_{cent}$, Detection decision
\begin{enumerate}
\item Initialize $t=1$, $\hat{\mathcal U}(0) = \emptyset $, residual vector $\mathbf r_{j,0} = \mathbf y_j$ for $j=1, \cdots, L$
   \item    \emph{For} $t=1$ to $t=T_0$
\item  Find the index $\beta(t)$ such that
 \begin{eqnarray*}
\beta(t) = \underset{\omega = 1,\cdots,N}{\arg~ \max} ~ \sum_{j=1}^{L}{{|\langle\mathbf r_{j,t-1}, \mathbf B_{j}(\omega)\rangle| }}
\end{eqnarray*}
 \item   Set $\hat{\mathcal U}(t) = \hat{\mathcal U}(t-1) \cup \{\beta(t)\}$
\item Compute the orthogonal projection operator:
\begin{scriptsize}
$\mathbf P_{j,t} =  \mathbf B_j(\hat{\mathcal U}(t)) \left(\mathbf B_j(\hat{\mathcal U}(t)) ^T \mathbf B_j(\hat{\mathcal U}(t)) \right)^{-1} \mathbf B_j(\hat{\mathcal U}(t)) ^T $\end{scriptsize} \hspace{1cm} for $j=1, \cdots, L$\\
Update the residual:  $\mathbf r_{j,t} = (\mathbf I - \mathbf P_{j,t})\mathbf y_j$ for $j=1,\cdots, L$
\item  End \emph{For}
     \item Set    $\hat{\mathcal U}(T_0) = \hat{\mathcal U}(t)$
\item Detection decision:  \\
If $\Lambda_{cent} = \sum_{j=1}^L || \mathbf P_{j,T_0} \mathbf y_j||_2^2 \geq \tau_0$, $\mathcal H_1$ is true, otherwise $\mathcal H_0$ is true where $\tau_0$ is the threshold.
 \end{enumerate}
 \caption{S-OMP based sparse signal detection: centralized approach}\label{algo_SOMP_det}
 \end{algorithm}

In a  centralized setting, each node has to transmit its length-$M$ compressed measurement vector  to the fusion center so that the fusion center processes $\{\mathbf y_l, \cdots, \mathbf y_L\}$ to make the detection decision. While this reduces the communication burden compared to forwarding uncompressed  data vectors of length-$N$, in the following  we consider further reduction  of information  to be transmitted by each node.    We propose two  distributed algorithms and they differ from each other in terms of  the communication overhead.
\begin{algorithm}
\textbf{Inputs}:  (At the $j$-th node) $\mathbf y_j  , \mathbf B_j$, $T_0$ for $j=1,\cdots,L$\\
\textbf{Outputs}: (At the $j$-th node) Partial support set estimate $\hat{\mathcal U}_j$, (At the fusion center) Decision statistic $\Lambda_{dist1}$, Detection decision\\
\textbf {Initialization}:\\
At the $j$-th node: $\hat{\mathcal U}_j(0) = \emptyset $, residual vector $\mathbf r_{j,0} = \mathbf y_j$ for $j=1,\cdots,L$.
\begin{enumerate}
 \item[-] \textbf {At the $j$-th node for $j=1,\cdots, L$}
 \item \emph{For} $t=1$ to $t=T_0$
\item Find the index $\beta_j(t)$ such that
$
\beta_j(t) = \underset{\omega = 1,\cdots,N}{\arg~ \max} ~ {|\langle \mathbf r_{j,t-1}, \mathbf B_j(\omega)\rangle|}
$
\item  Set $\hat{\mathcal U}_j(t) = \hat{\mathcal U}_j(t-1) \cup \{\beta_j (t)\}$
\item  Compute the projection operator $\mathbf P_{j,t} = \mathbf B_j(\hat{\mathcal U}_j(t)) \left(\mathbf B_j(\hat{\mathcal U}_j(t)) ^T \mathbf B_j(\hat{\mathcal U}_j(t)) \right)^{-1} \mathbf B_j(\hat{\mathcal U}_j(t)) ^T$. Update the residual vector:  $\mathbf r_{j,t} = (\mathbf I - \mathbf P_{j,t})\mathbf y_j$
    \item End \emph{For}
       \item  Set $\hat{\mathcal U}_j =\hat{\mathcal U}_j(T_0)$ and $ \mathbf P_{j,\hat{\mathcal U}_j}= \mathbf P_{j,T_0}$
\item Compute $\Lambda_j = || \mathbf P_{j,\hat{\mathcal U}_j} \mathbf y_j||_2^2 $ and transmit $\Lambda_j$ to the fusion center
\item[-] \textbf{At the fusion center}
\item Receive $\Lambda_j$ for $j=1, \cdots, L$
    \item Compute the decision statistic $\Lambda_{dist1} = \sum_{j=1}^{L} \Lambda_j$
\item Detection decision: if $\Lambda_{dist1}  \geq \tau_0$ decide $\mathcal H_1$, otherwise decide $\mathcal H_0$
 \end{enumerate}
 \caption{OMP  based sparse signal detection: distributed approach 1}\label{algo_dist_0}
 \end{algorithm}

A simple  version resulting  in  low  communication overhead is to obtain  the length $T_0$ support set independently at each node and compute a local decision statistic which  is then transmitted to the fusion center.  The fusion center then combines the local contributions to obtain the final decision statistic. This algorithm is presented in Algorithm \ref{algo_dist_0}. Based on the partial support set $\hat{\mathcal U}_j$ of size $T_0$ obtained  in step 6, $\Lambda_j = || \mathbf P_{j,\hat{\mathcal U}_j} \mathbf y_j||_2^2 $ is computed and transmitted to the fusion center by the $j$-th node for $j=1,\cdots,L$. Then, the fusion center  computes the  decision statistic $\Lambda_{dist1}$ as given in step 9 in Algorithm \ref{algo_dist_0}. Intuitively, it is expected that Algorithm \ref{algo_SOMP_det} performs better than Algorithm \ref{algo_dist_0}. However, in the  following,  we show that this is not true always (for certain values of $M$ and $T_0$).

\subsection{Comparing Algorithms \ref{algo_SOMP_det} and  \ref{algo_dist_0}}\label{compare_2_algo}
It is  noted that the $j$-th  element in the sum in $\Lambda_{cent}$ in Algorithm \ref{algo_SOMP_det}  accounts for the power of the compressed   observations   projected on to the subspace spanned by the columns of $\mathbf B_j$ for $j=1,\cdots, L$ indexed by the same set $\hat{\mathcal U}(T_0)$. On the other hand, $j$-th element in  the sum in $\Lambda_{dist1}$ in Algorithm \ref{algo_dist_0} represents  the power of the compressed observations  projected on to the subspace spanned by the columns of $\mathbf B_j$  indexed by $\hat{\mathcal U}_j$ which in general can be  different for $j=1,\cdots, L$. Thus, when $M$ is not very large, there can be at least one correct index in $\{\mathcal U_j\}_{j=1}^L$ although $\hat{\mathcal U}(T_0)$ does not contain any correct index, especially when $T_0$ is also small. In that case,  all the elements in the sum in  $\Lambda_{cent}$ correspond to the power of the compressed   observations projected on to a noise subspace while there can be  at least one element in the  sum in $\Lambda_{dist1}$ that accounts for the power projected into the signal subspace leading to better detection performance by Algorithm \ref{algo_dist_0}. We further analyze this scenario when $T_0=1$.

Let $P_1$ be the probability that $\beta(1)$ (to avoid confusion while referring to the two algorithms in the following discussion, we denote this as $\beta^c(1)$) estimated at step 3 in Algorithm \ref{algo_SOMP_det} is a correct index under $\mathcal H_1$. Similarly, let  $P_2$ be the probability that at least one $\beta_j(1)$ (we denote this as $\beta^d_j(1)$) for $j=1,\cdots, L$ estimated at step 2 in Algorithm \ref{algo_dist_0} is correct under $\mathcal H_1$.  Then we have,
\begin{eqnarray*}
P_1 &=& Pr(\beta^c(1)\in \mathcal U| \mathcal H_1) \nonumber\\
&=&Pr \left(\left\{\underset{\omega = 1,\cdots,N}{\arg~ \max} ~ \sum_{j=1}^{L}{{|\langle\mathbf y_{j}, \mathbf B_{j}(\omega)\rangle| }}\right\} \in \mathcal U |\mathcal H_1\right).
\end{eqnarray*}
On the other hand,
\begin{eqnarray}
P_2 &=& Pr(\beta^d_1(1)\in \mathcal U ~\mathrm{or}, \cdots, \mathrm{or}  ~ \beta^d_L(1)\in \mathcal U|\mathcal H_1).\label{eq_P2}
\end{eqnarray}
\begin{figure*}
    \centering
    \begin{subfigure}[b]{0.35\textwidth}
        \includegraphics[width=\textwidth]{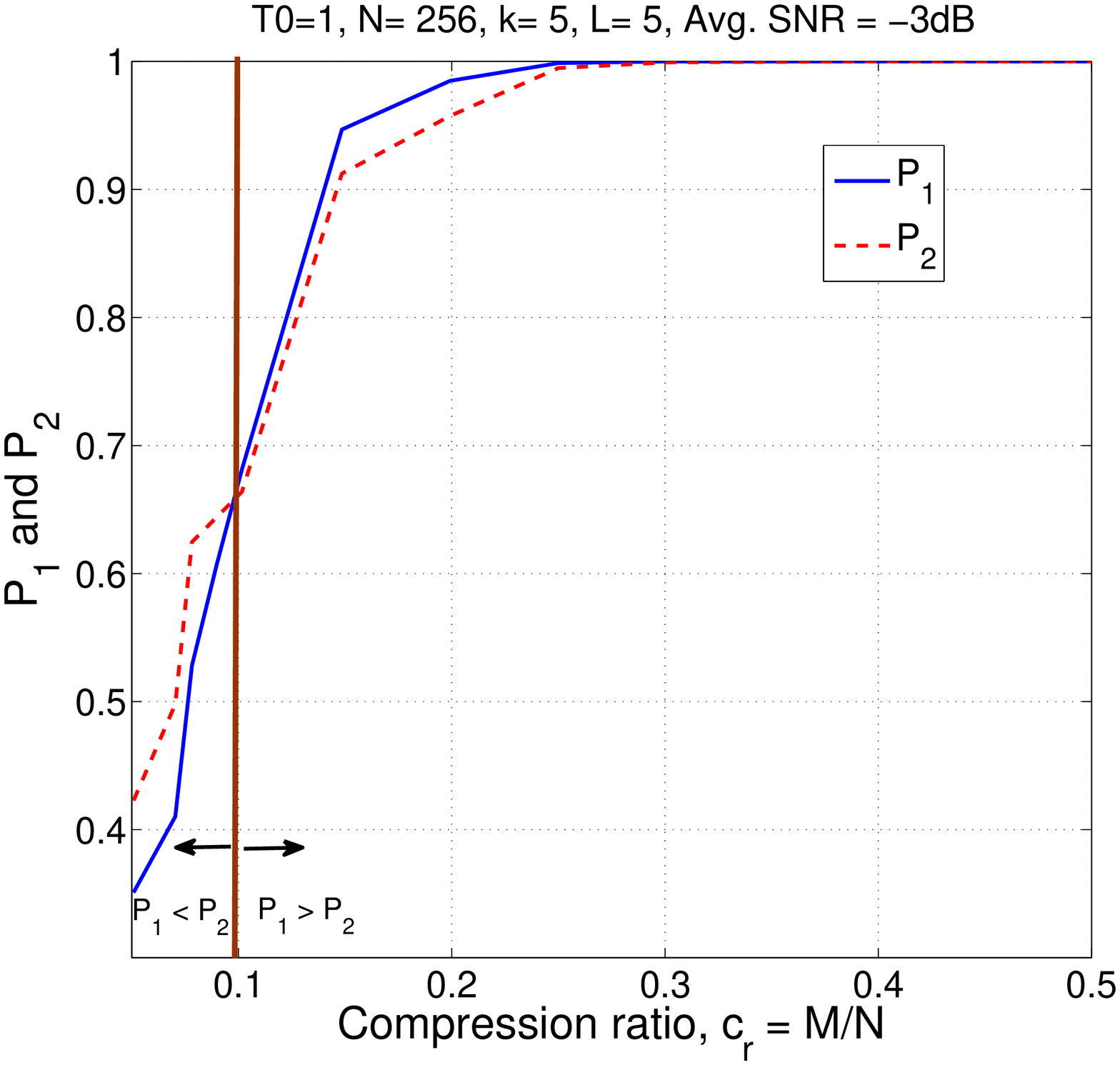}
        \caption{$k=5$, $\mathrm{SNR}=-3dB$}
           \end{subfigure}
    ~ 
    \begin{subfigure}[b]{0.35\textwidth}
        \includegraphics[width=\textwidth]{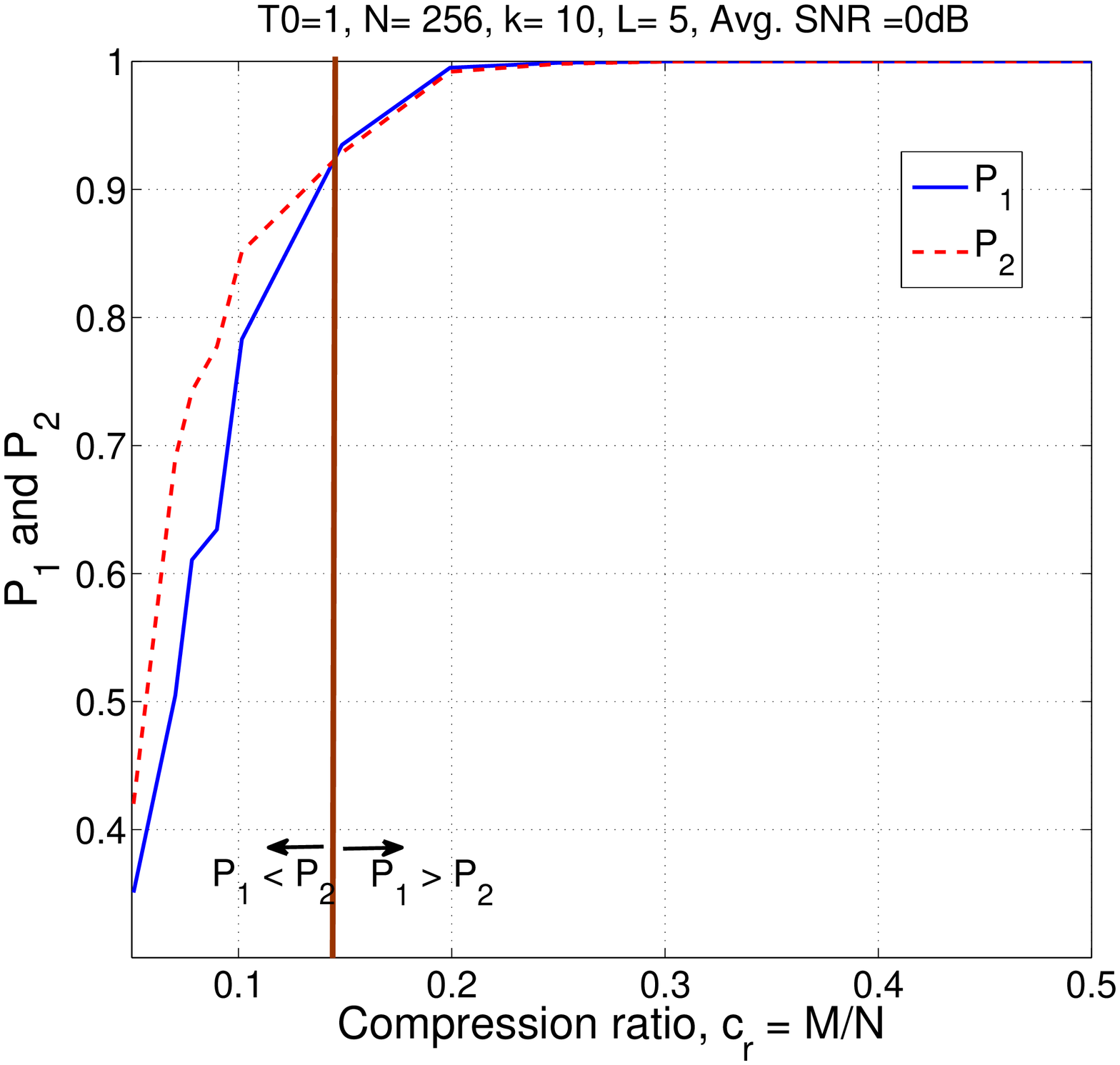}
        \caption{$k=10$, $\mathrm{SNR}=0dB$}
        \end{subfigure}
       \caption{Comparison of $P_1$ in (\ref{P_1}) and $P_2$ in (\ref{P_2}) as $c_r = M/N$ varies}\label{fig:P1_P2}
\end{figure*}
Authors in \cite{tropp_OMP2007} and \cite{Tropp_P12006}, respectively,  consider the conditions under which  OMP (with a single measurement vector) and S-OMP (with multiple measurements vectors)  are  capable of recovering the complete support set   while we focus  here only on  $T_0=1$.
Let us partition $\mathbf B_j$ as  $\mathbf B_j = [\bar{\mathbf B}_{j}~ \tilde{\mathbf B}_j]$ where $\bar{\mathbf B}_{j}$ and $\tilde{\mathbf B}_j$  are submatrices of $\mathbf B_j$ with $\bar{\mathbf B}_{j}$ containing   the columns of $\mathbf B_j$ indexed by $\mathcal U$, while $\tilde{\mathbf B}_j$ contains the rest of the columns of  $\mathbf B_j$ for $j=1,\cdots,L$. Let
\begin{eqnarray}
\rho_j(\mathbf y_j) = \frac{||\tilde{\mathbf B}_j^T \mathbf y_j||_{\infty}}{||\bar{\mathbf B}_j^T \mathbf y_j||_{\infty}}
\end{eqnarray}
where $||\cdot||_{\infty}$ is the infinity norm of a vector (or a matrix).
Then  $P_2$ in (\ref{eq_P2}) can be expressed as,
\begin{eqnarray}
P_2 &=& Pr( \rho_1(\mathbf y_1) < 1 ~\mathrm{or}, \cdots, \mathrm{or}  ~ \rho_L(\mathbf y_L) < 1 |\mathcal H_1)\nonumber\\
&=& 1 - \prod_{j=1}^L Pr(\rho_j(\mathbf y_j) > 1 | \mathcal H_1)\label{P_2}
\end{eqnarray}
where the second equality is due to the fact that the multiple nodes estimate the indices independently.
In order to compute $P_1$, let
\begin{eqnarray}
\rho_c(\mathbf Y) = \frac{||[\tilde{\mathbf B}_1^T \mathbf y_1, \cdots, \tilde{\mathbf B}_L^T \mathbf y_L] ||_{\infty}}{||[\bar{\mathbf B}_1^T \mathbf y_1, \cdots, \bar{\mathbf B}_L^T \mathbf y_L]||_{\infty}}
\end{eqnarray}
where $[\tilde{\mathbf B}_1^T \mathbf y_1, \cdots, \tilde{\mathbf B}_L^T \mathbf y_L]$ denotes the matrix in which the $j$-th  column is $\tilde{\mathbf B}_j^T \mathbf y_j$ for $j=1,\cdots,L$.
Then we have,
\begin{eqnarray}
P_1 = Pr(\rho_c(\mathbf Y) < 1).\label{P_1}
\end{eqnarray}

It is difficult to find exact analytical expressions for $P_1$ and $P_2$ in (\ref{P_1}) and (\ref{P_2}), respectively. Thus, in Fig. \ref{fig:P1_P2} we plot $P_1$ and $P_2$ obtained numerically  as the compression ratio $c_r = \frac{M}{N}$ varies for different values of $k$ and $L$. It can be observed that there is a threshold for $c_r$ for which $P_1 > P_2$. In the region of $c_r$ where $P_2 > P_1$, we expect Algorithm \ref{algo_dist_0} to perform better than Algorithm \ref{algo_SOMP_det} which is observed  in the numerical results section.  Thus,  Algorithm \ref{algo_dist_0} is promising in terms of both performance and the communication overhead when $c_r$ (or $M$) is small.  However, as $c_r$ increases,  Algorithm \ref{algo_SOMP_det} performs better than  Algorithm \ref{algo_dist_0}. It is more likely that Algorithm \ref{algo_SOMP_det} estimates all the $T_0$ indices correctly compared to the event that all the  nodes estimate all the $T_0$ indices correctly in  Algorithm \ref{algo_dist_0} as $M$ increases.

\begin{algorithm}
\textbf{Inputs}:  (At the $j$-th node) $\mathbf y_j  , \mathbf B_j$, $T_0$ for $j=1,\cdots,L$\\
\textbf{Outputs}: (At the fusion center) Partial support set estimate $\hat{\mathcal U}$,  Decision statistic $\Lambda_{dist2}$, Detection decision\\
\textbf {Initialization}:\\
At the $j$-th node: $\hat{\mathcal U}_j(0) = \emptyset $, residual vector $\mathbf r_{j,0} = \mathbf y_j$ for $j=1,\cdots,L$.
\begin{enumerate}
 \item[-] \textbf {At the $j$-th node for $j=1,\cdots, L$}
 \item \emph{For} $t=1$ to $t=T_0$
\item Find the index $\beta_j(t)$ such that
$
\beta_j(t) = \underset{\omega = 1,\cdots,N}{\arg~ \max} ~ {|\langle \mathbf r_{j,t-1}, \mathbf B_j(\omega)\rangle|}
$
\item  Set $\hat{\mathcal U}_j(t) = \hat{\mathcal U}_j(t-1) \cup \{\beta_j (t)\}$
\item  Compute the projection operator $\mathbf P_{j,t} = \mathbf B_j(\hat{\mathcal U}_j(t)) \left(\mathbf B_j(\hat{\mathcal U}_j(t)) ^T \mathbf B_j(\hat{\mathcal U}_j(t)) \right)^{-1} \mathbf B_j(\hat{\mathcal U}_j(t)) ^T$. Update the residual vector:  $\mathbf r_{j,t} = (\mathbf I - \mathbf P_{j,t})\mathbf y_j$
    \item End \emph{For}
       \item  Set $\hat{\mathcal U}_j =\hat{\mathcal U}_j(T_0)$ and transmit $\mathcal U_j$ to the fusion center
\item Receive $\hat{\mathcal U}$ from the fusion center
\item Compute $\Lambda_j = || \mathbf P_{j,\hat{\mathcal U}} \mathbf y_j||_2^2 $ and transmit $\Lambda_j$ to the fusion center
\item[-] \textbf{At the fusion center}
    \item Receive $\hat{\mathcal U}_j$ for $j=1, \cdots, L$
    \item Compute $\hat{\mathcal U} = f(\{\hat{\mathcal U}_j\}_{j=1}^L )$ as discussed in subsection \ref{dist_alg1} and transmit to all the nodes
\item Receive $\Lambda_j$ for $j=1,\cdots, L$ from the nodes
\item Compute the decision statistic $\Lambda_{dist2} = \sum_{j=1}^{L} \Lambda_j$
\item Detection decision: if $\Lambda_{dist2}  \geq \tau_0$ decide $\mathcal H_1$, otherwise decide $\mathcal H_0$
 \end{enumerate}
 \caption{OMP  based sparse signal detection: distributed approach 2}\label{algo2_1}
 \end{algorithm}

\subsection{Description of Algorithm \ref{algo2_1}}\label{dist_alg1}
When estimating the partial support of size $T_0$ at each node, Algorithm \ref{algo_dist_0} does not account for the  fact that all the nodes have the same support in the support set estimation stage.  In the second distributed algorithm presented in Algorithm \ref{algo2_1}, the fact that all the nodes share the same support is taken into account by fusion as described below.
In Algorithm \ref{algo2_1}, each node estimates a partial support set of size $T_0$ after running $T_0 $ iterations of the standard OMP algorithm independently. Then, the $T_0$-length support set, denoted by $\hat{\mathcal U}_j$ is transmitted by the $j$-th node to the fusion center.   Based on $\{\hat{\mathcal U}_j\}_{j=1}^L $, the fusion center estimates a length-$k$ support set $\hat{\mathcal U}$ and transmits it back to the nodes. The $j$-th node then computes $\Lambda_j = ||\mathbf P_{j,\hat{\mathcal U}} \mathbf y_j||_2^2 $ where $ \mathbf P_{j,\hat{\mathcal U}}  = \mathbf B_j(\hat{\mathcal U}) \left(\mathbf B_j(\hat{\mathcal U}) ^T \mathbf B_j(\hat{\mathcal U}) \right)^{-1} \mathbf B_j(\hat{\mathcal U}) ^T$ and transmits back to the fusion center. The fusion center computes the decision statistic $\Lambda_{dist2} = \sum_{j=1}^L \Lambda_j $.

In this algorithm, each node communicates with the fusion center twice (steps 6 and 8). The communication overhead in step 8 is similar to that in Algorithm  \ref{algo_dist_0}.  In step 6, the estimated  partial support set $\mathcal U_j$ of size $T_0$ is transmitted to the fusion center. Then,  the fusion center computes an updated estimate for the support set which  can be larger than $T_0$ and of course less than  $k$. Let $I$ be the set containing all the indices (can have multiple occurrences of the same index) in $\{\hat{\mathcal U}_j\}_{j=1}^L $.  Let $D$ be a set  which  contains all the distinct values in $I$ ordered in a  descending manner based on the frequency of occurrence in $I$.  Let $\mathrm{length} (D) $ denote the number of elements  in $D$. If each node estimates $T_0$ correct locations of the support via OMP, $\mathrm{length} (D) $ can have maximum value of $k$. However, if there is an error in estimating the support after $T_0$ iterations at a given node, $\mathrm{length} (D) $ can be greater than $k$; in the worst case, $\mathrm{length} (D) = T_0 L$. Thus, we have, $T_0 \leq \mathrm{length} (D) \leq T_0L$. We compute $\hat{\mathcal U}$ as $\hat{\mathcal U} = D(1:\min(\mathrm{length} (D),k))$. Thus, in this algorithm, partial support estimates are fused (via the majority rule) at the fusion center to compute an  enlarged  (or sometimes  complete) support set.

When comparing Algorithms \ref{algo_dist_0} and \ref{algo2_1}, the additional communication overhead required by each node in Algorithm \ref{algo2_1} comes from  the  need for transmitting  indices of length $T_0$ in step 6. Thus, the  communication overhead in Algorithm \ref{algo2_1} can be reduced by letting all the nodes to run only $1$ iteration of the OMP algorithm; i.e.,  $T_0=1$. As illustrated in the numerical results section, Algorithm \ref{algo2_1} provides promising performance when $T_0=1$ since  the local decision statistic is computed  based on a  fused and enlarged  support set  compared to both  Algorithms \ref{algo_dist_0} and \ref{algo_SOMP_det}. However, as $T_0$ increases, it can be observed  that Algorithm \ref{algo_dist_0} performs better than Algorithm \ref{algo2_1} even though  the communication overhead of Algorithm \ref{algo2_1} increases as $T_0$ increases. Thus, the improved performance of Algorithm \ref{algo2_1} is significant with only small $T_0$ which is desirable. Further,  under certain conditions, both distributed algorithms  perform  better than the  centralized version presented in Algorithm \ref{algo_SOMP_det}. More details are provided  in the   numerical results section.

\subsection{Communication complexity}
In this subsection, we summarize the communication complexities of Algorithms \ref{algo_SOMP_det}-\ref{algo2_1} in terms of the number of messages to be transmitted by each node to the fusion center.
\begin{itemize}
\item Algorithm \ref{algo_SOMP_det}: $M$ messages per node
\item Algorithm \ref{algo_dist_0}: $1$ message per node
\item Algorithm \ref{algo2_1}: $T_0+1$ messages per node; needs feedback from the fusion center.
\end{itemize}

\section{Numerical  Results}\label{sec_simulation}
In this section, we illustrate the performance of sparse signal detection based on the knowledge of a partial support set.  The signals are assumed to be sparse in the canonical basis, so that $\mathbf B_j= \mathbf A_j$ for $j=1,\cdots,L$. The elements of $\mathbf A_j$ are drawn from a standard normal ensemble  and then $\mathbf A_j$ is  orthogonalized so that $\mathbf A_j \mathbf A_j^T = \mathbf I$. The sparse support set is selected from $[1,N]$ uniformly. The coefficients of $\mathbf s_j$ for $j=1,\cdots,L$ are taken  uniformly from $[-b,-a]\bigcup [a,b]$.
For given $a$ and $b$,  the SNR is varied by changing $\sigma_v^2$.
The average (over all the nodes) uncompressed SNR, $\bar\gamma$,  is defined as  $\bar\gamma=\sum_{j=1}^L \frac{||\mathbf s_j||_2^2}{LN\sigma_v^2} = \sum_{j=1}^L \frac{\gamma_j}{LN}$ where $\gamma_j= \frac{||\mathbf s_j||_2^2}{\sigma_v^2}$. In the following figures, by average SNR, we mean the uncompressed SNR, $\bar\gamma$. As defined before,  $c_r = \frac{M}{N}$ is the  compression ratio at a given node.  Further, we define $ t_f = \frac{\hat t}{k}$ to be the minimum fraction of the support that needs  to be known. First, we illustrate the minimum fraction of support to be known  to achieve a desired detection performance level.
\begin{figure}
  \centering
  \centerline{\includegraphics[width=9.0cm]{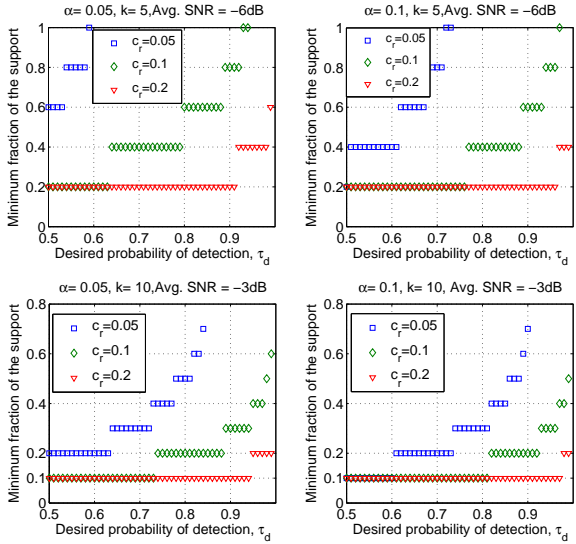}}
           \caption{Minimum fraction of the support to be known, $t_f$,  to achieve a desired detection performance, $N=256$, $L=5$, $\sigma_v^2=1$}\label{fig:min_frac1}
\end{figure}
\begin{figure}
    \centering
    \begin{subfigure}[b]{0.45\textwidth}
        \includegraphics[width=\textwidth]{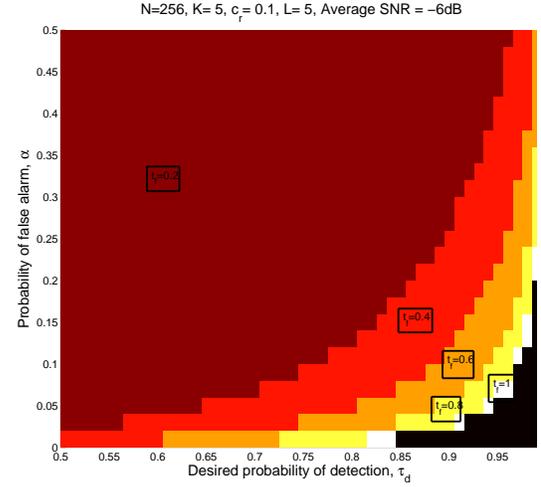}
        \caption{$L=5$, $\mathrm{SNR}=-6dB$, $\sigma_v^2=1$}
           \end{subfigure}
    \begin{subfigure}[b]{0.45\textwidth}
        \includegraphics[width=\textwidth]{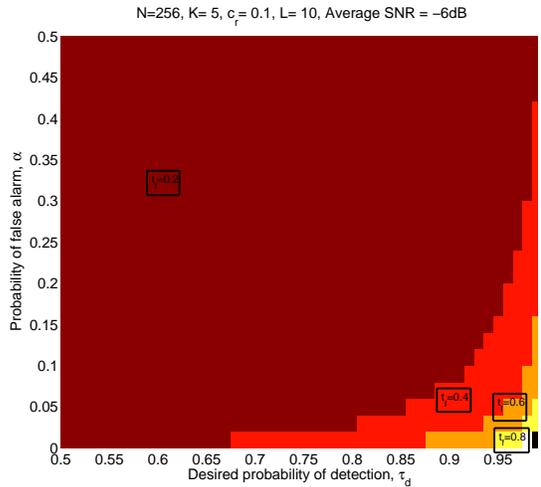}
        \caption{$L=10$, $\mathrm{SNR}=-6dB$, $\sigma_v^2=1$}
            \end{subfigure}
        \begin{subfigure}[c]{0.45\textwidth}
        \includegraphics[width=\textwidth]{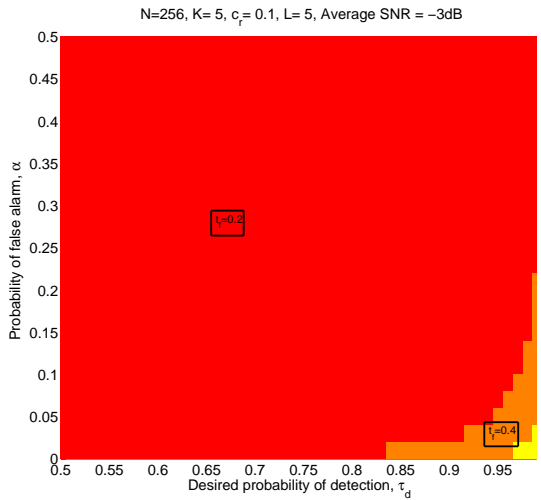}
        \caption{$L=5$, $\mathrm{SNR}=-3dB$, $\sigma_v^2=0.5$}
            \end{subfigure}
    \caption{Minimum fraction of the support to be known, $t_f$,  to achieve a desired detection performance as  $\tau_d$  and $\alpha$  vary, values of $\tau_d$ and $\alpha$ in the black region cannot be achieved even if all the indices of the support are known correctly; $N=256$, $k=5$, $c_r=0.1$ }\label{fig:min_frac2}
\end{figure}

\subsection{The minimum fraction of the support set  to achieve a desired performance level}\label{numerical_1}
  In Figs. \ref{fig:min_frac1} and \ref{fig:min_frac2},   the minimum fraction of the support set  is shown as different parameters vary. We let  $a=3$ and $b=4$. In Fig. \ref{fig:min_frac1}, $t_f$ is shown as  $\tau_d$ varies for different values of $c_r$, $\alpha$ and $k$ while keeping $N=256$, $L=5$ and $\sigma_v^2=1$.   It is worth mentioning that the average SNR in the bottom two subplots in Fig. \ref{fig:min_frac1} is different from the two top subplots due to the change of  $k$ (albeit $\sigma_v^2$ is kept constant).  When  $c_r$ increases, the fraction of the support set to be estimated  becomes small  to achieve the desired  performance level. For example, in the top left figure, to achieve $\tau_d \approx 0.9$, the knowledge of only one index of the  support set is sufficient when $c_r=0.2$ while it requires the knowledge of $4$ (out of 5) indices when $c_r=0.1$. As  $c_r$ increases, the SNR of the compressed signal increases (although the uncompressed SNR remains the same) resulting in  better performance.  Similarly, the impact of $\alpha$ and $k$ (thus the uncompressed SNR) on $t_f$  is depicted in Fig. \ref{fig:min_frac1}.

\begin{figure}
    \centering
    \begin{subfigure}[b]{0.3\textwidth}
        \includegraphics[width=\textwidth]{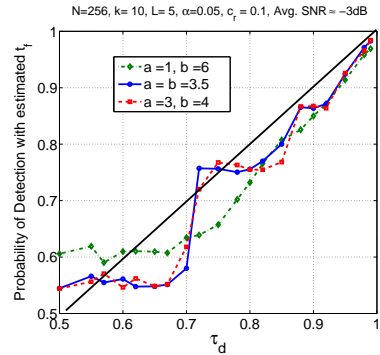}
        \caption{$N=256$, $k=10$, $\sigma_v^2=1$, $\alpha=0.05$, $c_r=0.1$}
           \end{subfigure}
           \begin{subfigure}[b]{0.3\textwidth}
        \includegraphics[width=\textwidth]{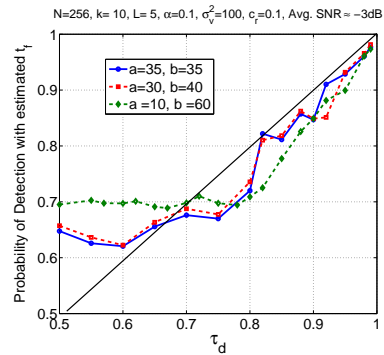}
        \caption{$N=256$, $k=10$, $\sigma_v^2=100$, $\alpha=0.1$, $c_r=0.1$}
           \end{subfigure}
         \begin{subfigure}[b]{0.3\textwidth}
        \includegraphics[width=\textwidth]{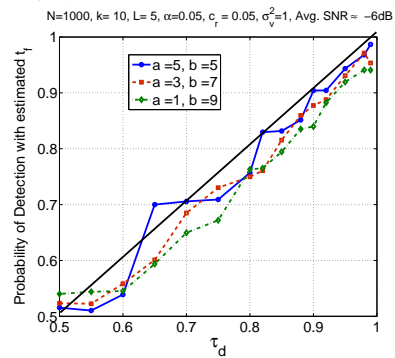}
        \caption{$N=1000$, $k=10$, $\sigma_v^2=1$, $\alpha=0.05$, $c_r=0.05$}
            \end{subfigure}
    \begin{subfigure}[b]{0.3\textwidth}
        \includegraphics[width=\textwidth]{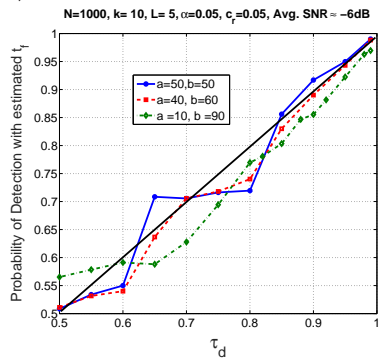}
        \caption{$N=1000$, $k=10$, $\sigma_v^2=100$, $\alpha=0.05$, $c_r=0.05$}
            \end{subfigure}
           \caption{Impact of nonzero coefficients  }\label{fig:nonzero}
\end{figure}

In resource constrained sensor  networks working in a distributed  setting, it is  desirable   to keep $c_r$ as small  as possible.
In Fig. \ref{fig:min_frac2}, the impact of $L$ and uncompressed SNR on $t_f$ is  illustrated when $c_r$ is kept at a lower value ($c_r=0.1$)  for $k=5$ and $N=256$. The regions of $\tau_d$ and $\alpha$ that can be achieved with a specific $t_f$ obtained from Proposition \ref{proposition} are shown in different colors. The black portion  corresponds to the regions of $\tau_d$ and $\alpha$ that  cannot be achieved even if all the support indices corresponding to nonzero coefficients of sparse signals are known.   It can be seen that,  when $c_r$ is small, the desired detection performance can be achieved by estimating only a very small fraction of the support set when either   $\bar\gamma$ is large or by increasing $L$. For example, as depicted in Fig. \ref{fig:min_frac2} (c) when $L=5$ and uncompressed $\mathrm{SNR}=-3dB$, almost all the regions of $\tau_d$ and $\alpha$ can be achieved by  knowing only  one or two  indices of  the  support set correctly (so that $t_f=0.2$ or $t_f=0.4$). In summary, Figures \ref{fig:min_frac1} and  \ref{fig:min_frac2} demonstrate  the regions of $k$, $L$, $\bar\gamma$, $\tau_d$, $\alpha$ and $c_r$  so that estimation of only a small fraction of the support is sufficient for sparse signal detection with  desired performance.
\begin{figure}
    \centering
    \begin{subfigure}[b]{0.5\textwidth}
        \includegraphics[width=\textwidth]{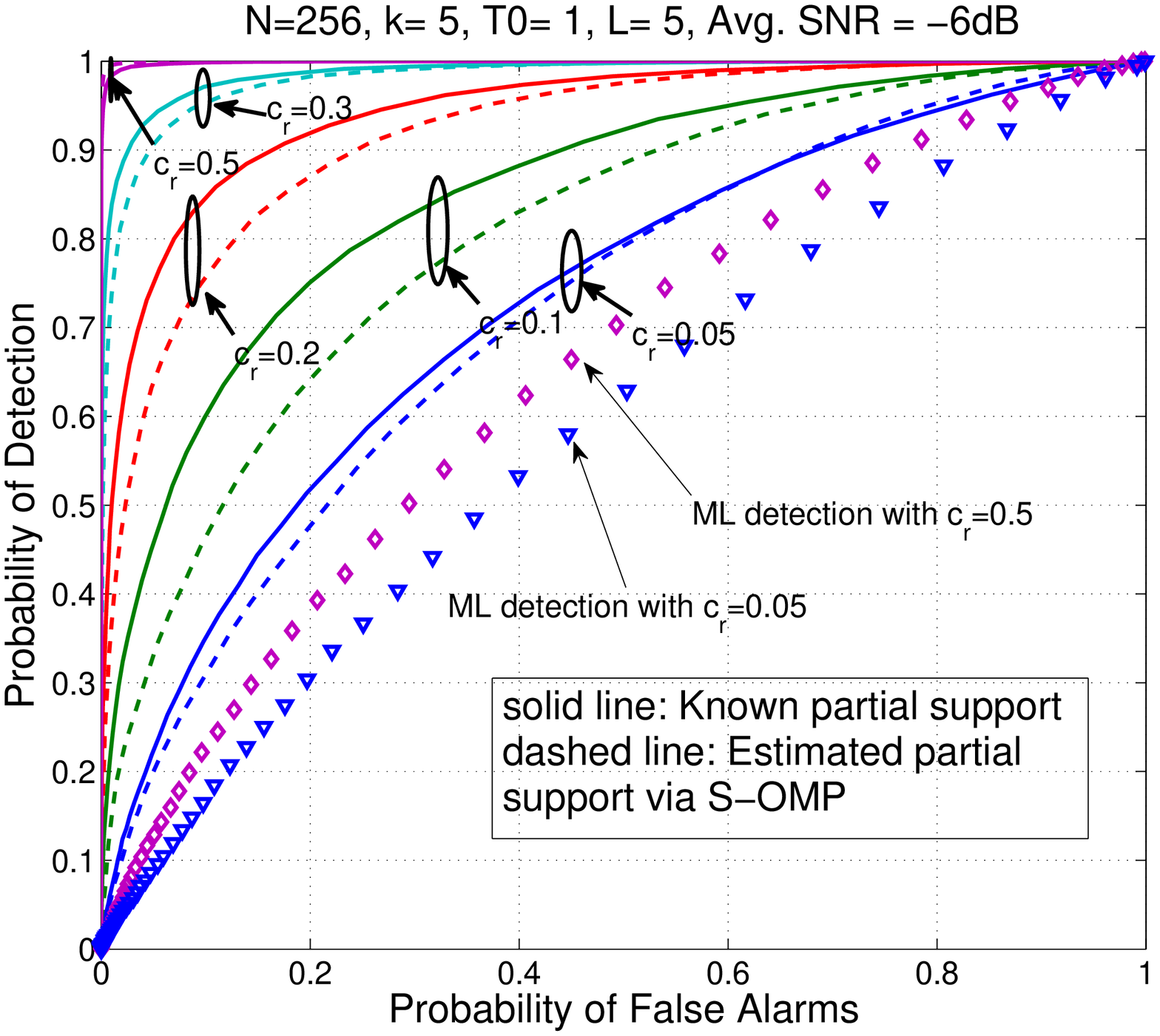}
        \caption{$T_0=1$}
        \label{fig:gull}
    \end{subfigure}
    ~ 
    \begin{subfigure}[b]{0.5\textwidth}
        \includegraphics[width=\textwidth]{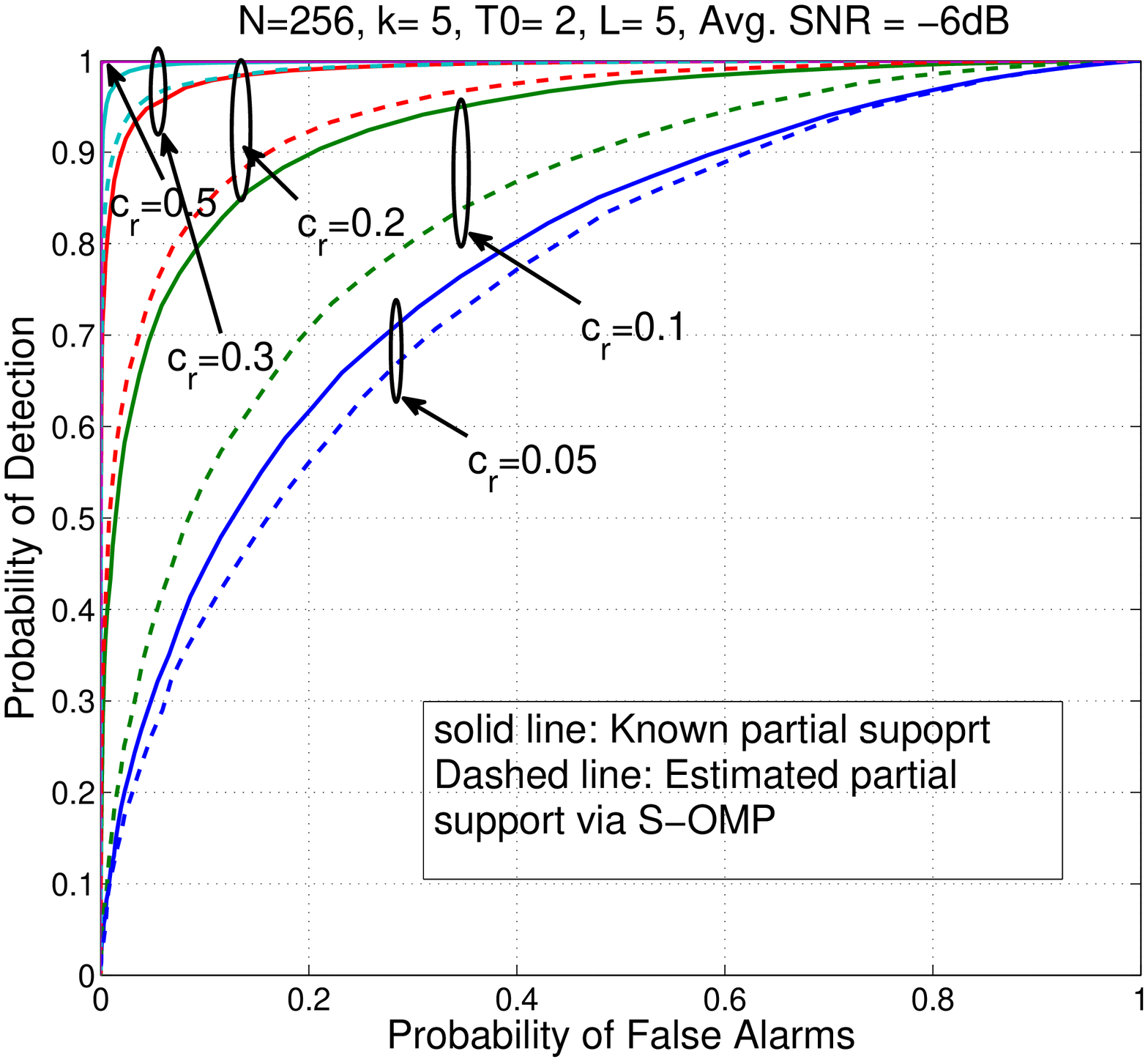}
        \caption{$T_0=2$}
        \label{fig:tiger}
    \end{subfigure}
        ~ 
    \caption{ROC curves for detection based on partial support set estimation, $k=5$, $N=256$, $L=5$}\label{fig:ROC_partial}
\end{figure}

\subsection{Impact of the values of nonzero coefficients}
{The minimum fraction of nodes in Proposition  \ref{proposition} was computed under the assumption that the nonzero coefficients of the sparse signals do not deviate much from each other. In this subsection, we analyze the results in  Proposition  \ref{proposition} when this assumption is relaxed.   In Fig. \ref{fig:nonzero}, we plot the performance of  the detector (\ref{Lamda_1}) when the  size of the support is  computed  based on  Proposition \ref{proposition} given that  the  nonzero coefficients are actually far from each other.  For a given  $\tau_d$ on the x-axis, $t_f$ is computed as in ({\ref{hat_u}}). Substituting  $T_0 = t_f k$ in (\ref{Lamda_1}), the probability of detection (obtained numerically)  with the  decision statistic (\ref{Lamda_1}) is shown on the y-axis. It is noted that since Proposition \ref{proposition} does not give a clue on which $T_0$ indices should be selected, we get $T_0$ indices uniformly from $\mathcal U$. The first two subplots correspond to a small problem size  while last two subplots consider relatively large $N$. For both problem sizes, relatively large and small values for $a$ and $b$ are considered. It is noted  that  $a$, $b$    and $\sigma_v^2$ are  changed accordingly  so that the average SNR remains approximately the same for given $N$ and $L$.  The desired scenario is that the  curves stay close or above the black (y=x) line. It can be seen that when $N$ is large,  all the curves remain fairly close  to (or above)  the black line for most values of   $\tau_d$. When $N$ is not very large (subplots (a) and (b)), performance degradation can be seen for  some  values of  $\tau_d$.  However, for relatively  large values of  $\tau_d$  (which is the most interesting scenario),   there is no significant performance degradation using Proposition \ref{proposition} even the coefficient  values  deviate quite significantly from each other irrespective of the problem size.}

\subsection{Performance of sparse signal detection in a centralized setting: Detection with known partial support and via S-OMP}
In this subsection, we compare the detection performance when the partial support set is exactly known and it is estimated via S-OMP in a centralized setting.
In Fig. \ref{fig:ROC_partial}, we show receiver operating characteristic (ROC) curves with $10^4$ Monte carlo runs. In the case where the partial support set  is known,  $T_0$ indices  are selected randomly from $\mathcal U$ and the results are averaged over $20$ trials.
In the S-OMP based detection, $T_0$ indices are  selected according to  Algorithm \ref{algo_SOMP_det}.
In Fig. \ref{fig:ROC_partial} (a), $T_0=1$ while in Fig. \ref{fig:ROC_partial} (b),  $T_0=2$. The other parameters $k$, $L$, and $N$  remain the same in both subfigures  as stated while $a=3$ and $b=4$. In Fig. \ref{fig:ROC_partial} (a), we further plot the detection performance when the sparse signal is estimated via  maximum likelihood (ML) estimation ignoring sparsity; i.e. when the decision statistic is $\Lambda_{ML} =  \sum_{j=1}^L || \mathbf P_{j}\mathbf y_j||_2^2$ with $\mathbf P_{j} = \mathbf B_j\left(\mathbf B_j ^T \mathbf B_j\right)^{-1} \mathbf B_j ^T$. It is observed that exploitation of sparsity even with $T_0=1$ outperforms the ML based detection approach and we avoid plotting curves for  this comparison in subsequent figures.

It is observed that, for both  $T_0=1$ and $T_0=2$, the performance of the S-OMP based detection is close to the performance with known  partial support  for very small and relatively large $c_r$ values.  When $c_r$ is very small (e.g $\approx 0.05$ in Fig.\ref{fig:ROC_partial}), the SNR of the compressed observation vector is small, thus even with the known support set, the power of the compressed   observations projected onto the known subspace is not significant compared to the analogous  noise power. Thus, close (and poor) performance  is observed when the partial support is known or estimated. On the other hand, when $c_r$ is large, the  estimated support set of size  $T_0$ via Algorithm \ref{algo_SOMP_det} is more accurate, resulting in close performance to the case where the partial support  is exactly known.  However, when $c_r$ is moderate (i.e. $\approx .1$ or $.2$ in Fig. \ref{fig:ROC_partial})), S-OMP based detection has a performance gap compared to detection with known partial support of the same size. When $c_r$ takes such values, the accuracy of the estimated partial support set via S-OMP  is not quite good thus, resulting in a performance gap. In such regions of $c_r$ where  compressed measurements per node  are not sufficient to provide a good estimate of the support set via S-OMP, the detection performance  is improved with the two distributed algorithms as will be illustrated next.

\begin{figure*}
    \centering
    \begin{subfigure}[b]{0.38\textwidth}
        \includegraphics[width=\textwidth]{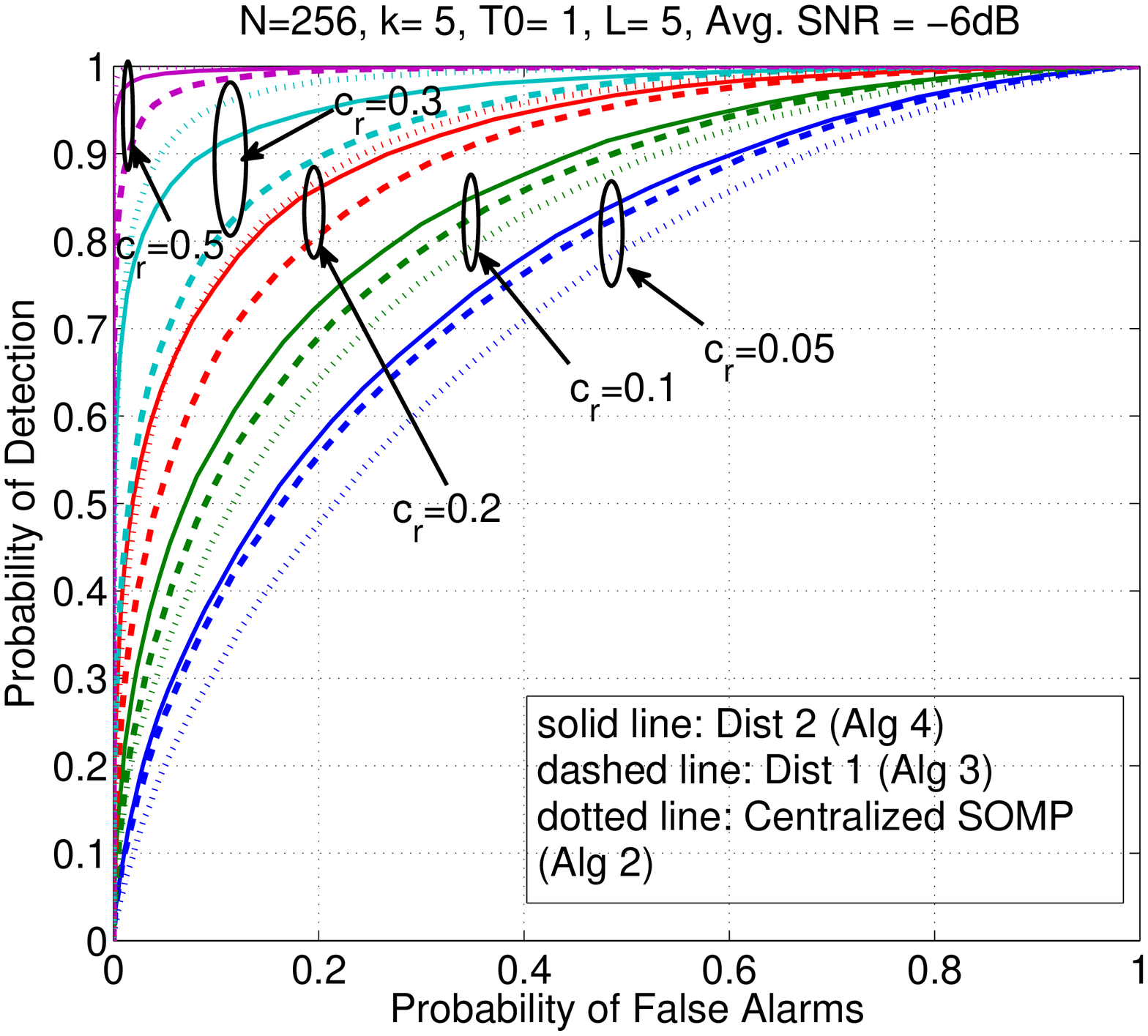}
        \caption{$N=256$, $k=5$, Average $\mathrm{SNR}=-6dB$}
        \label{fig:gull}
    \end{subfigure}
    ~ 
    \begin{subfigure}[b]{0.38\textwidth}
        \includegraphics[width=\textwidth]{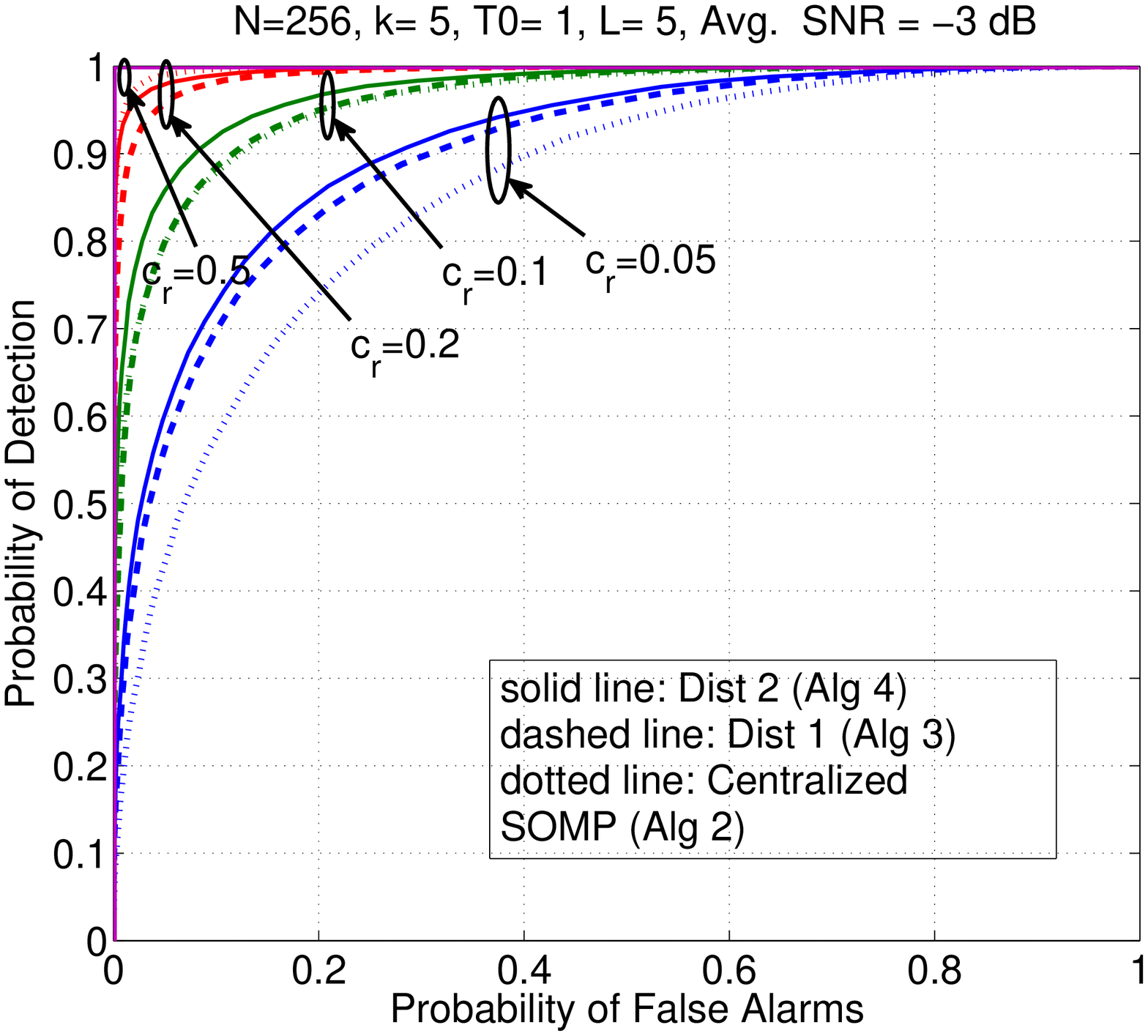}
        \caption{$N=256$, $k=5$, Average $\mathrm{SNR}=-3dB$}
        \label{fig:tiger}
    \end{subfigure}
        \begin{subfigure}[b]{0.38\textwidth}
        \includegraphics[width=\textwidth]{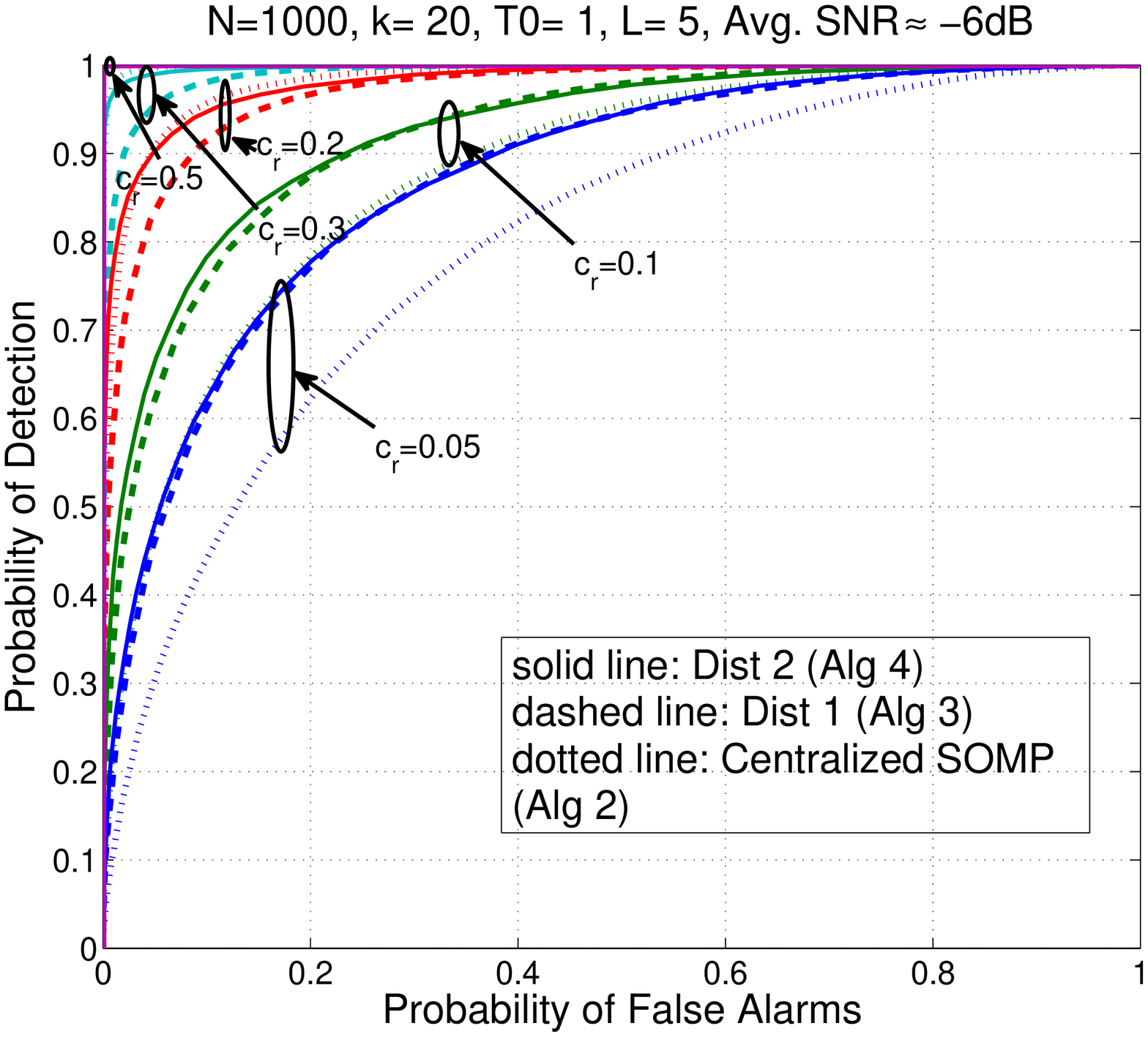}
        \caption{$N=1000$, $k=20$, $T_0=1$}
        \label{fig:gull}
    \end{subfigure}
       \begin{subfigure}[b]{0.38\textwidth}
        \includegraphics[width=\textwidth]{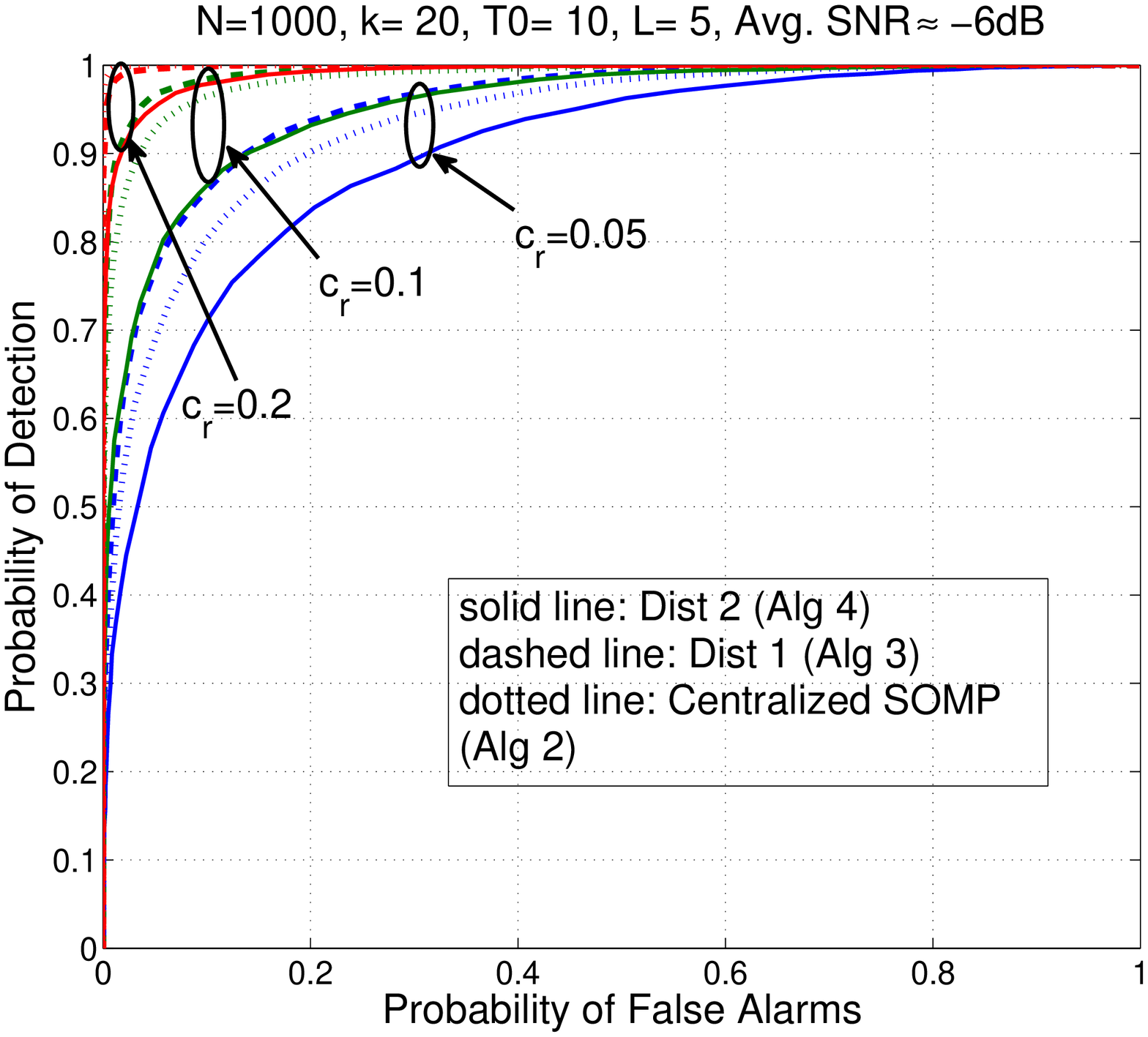}
        \caption{$N=1000$, $k=20$, $T_0=10$}
        \label{fig:tiger}
    \end{subfigure}
    ~ 
    \caption{ROC curves for OMP based detection; S-OMP and two distributed approaches  }\label{fig:ROC_OMP}
\end{figure*}
\subsection{Performance comparison  of S-OMP based and two distributed OMP based algorithms}
In Fig. \ref{fig:ROC_OMP},  ROC curves are plotted  for different values of $c_r$  for centralized  and two distributed OMP based algorithms. We let $a=3$ and $b=4$.  In Figs. \ref{fig:ROC_OMP} (a) and (b), we consider relatively a small sized problem with  $N=256$, $k=5$, $L=5$, $T_0=1$ and the average SNR (uncompressed) is varied by changing  $\sigma_v^2$. On the other hand, in Figs. \ref{fig:ROC_OMP} (c) and (d), a bigger sized problem is considered so that $N=1000$, $k=20$, $L=5$, the average SNR (uncompressed) $=-6~dB$ and $T_0$ is varied.   We make several important observations here.
\begin{enumerate}
\item When $c_r$ and $T_0$ are  quite  small, the S-OMP based algorithm performs worse than the two distributed algorithms (subplots (a), (b) and (c)).
    This (counter-intuitive) phenomenon was discussed in detail  in Subsection \ref{compare_2_algo} for $T_0=1$ considering  Algorithms \ref{algo_SOMP_det} and   \ref{algo_dist_0}.  Thus, Algorithm  \ref{algo_dist_0}  produces  a better decision  statistic to discriminate between the noise and the signal.
    Similar explanation holds for Algorithm \ref{algo2_1}  which provides even better results compared to Algorithm \ref{algo_dist_0} due to the fusion  of support indices estimated at multiple nodes.   Thus, when $M$ (thus $c_r$) is not sufficient to provide accurate estimates for the support set after $T_0$ iterations via S-OMP,  the two distributed algorithms, by fusion, provide better performance.
     \item     When $T_0$ is relatively large, from Fig. \ref{fig:ROC_OMP}  (d), it is seen that Algorithm \ref{algo2_1} works relatively poorly  compared to the other  two algorithms. As discussed in Section \ref{dist_alg1} and seen in Fig. \ref{fig:ROC_OMP} (a), (b) and (c), Algorithm \ref{algo2_1} is promising in terms of communication complexity only when $T_0$ is  small.
    \end{enumerate}

 It is noted that, in the S-OMP algorithm, raw compressed observations are fused in computing the support set indices and the decision statistic for detection is obtained based on such estimated indices. On the  other  hand, in Algorithms \ref{algo_dist_0}  and \ref{algo2_1}, individually computed local decision statistics are fused to compute the final decision statistic at the fusion center.
 From Fig. \ref{fig:ROC_OMP}, it is seen that, measurement fusion via S-OMP provides better performance only when $c_r$ is relatively large.  This is an important  observation which is somewhat counter intuitive, since one would expect a centralized  scheme to work better  than any distributed  approach. There are several reasons. One is that S-OMP is not an optimal method of computing the sparse support set although it  provides promising results when $c_r$ exceeds a certain threshold. When $c_r$ is small, there can be other  variants (such as two distributed algorithms presented here) of OMP other than S-OMP that would provide better  performance in sparse support recovery. Another reason is that, we focus on partial signal recovery (and detection based on that) but not on complete  signal recovery. Thus, we conclude that better decision statistics  for detection based on OMP  can be computed in a distributed setting compared to a centralized setting  under certain conditions.

\begin{figure}
    \centering
    \begin{subfigure}[b]{0.32\textwidth}
        \includegraphics[width=\textwidth]{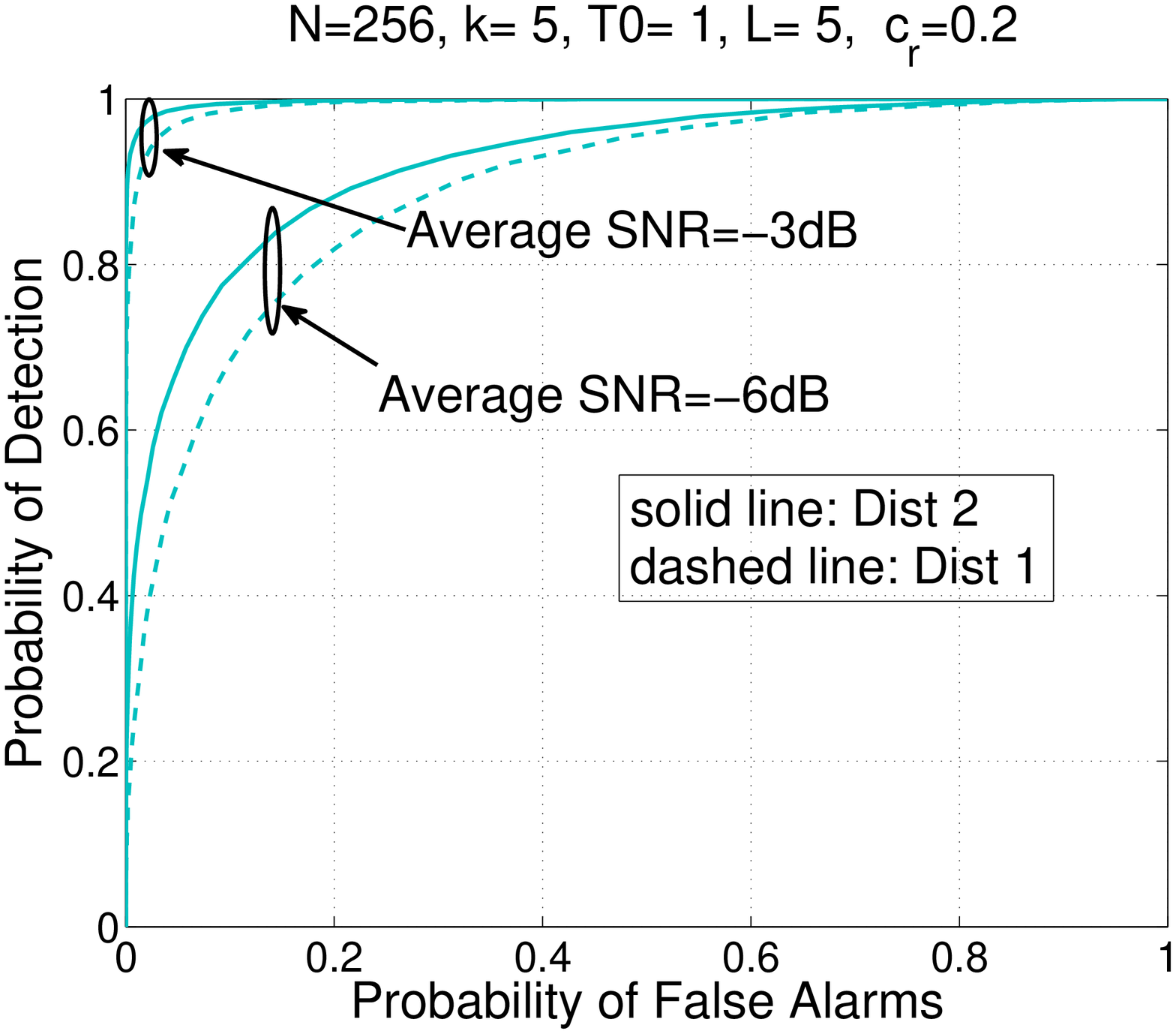}
        \caption{ $T_0=1$, $L=5$}
        \label{fig:gull}
    \end{subfigure}
         \begin{subfigure}[b]{0.32\textwidth}
        \includegraphics[width=\textwidth]{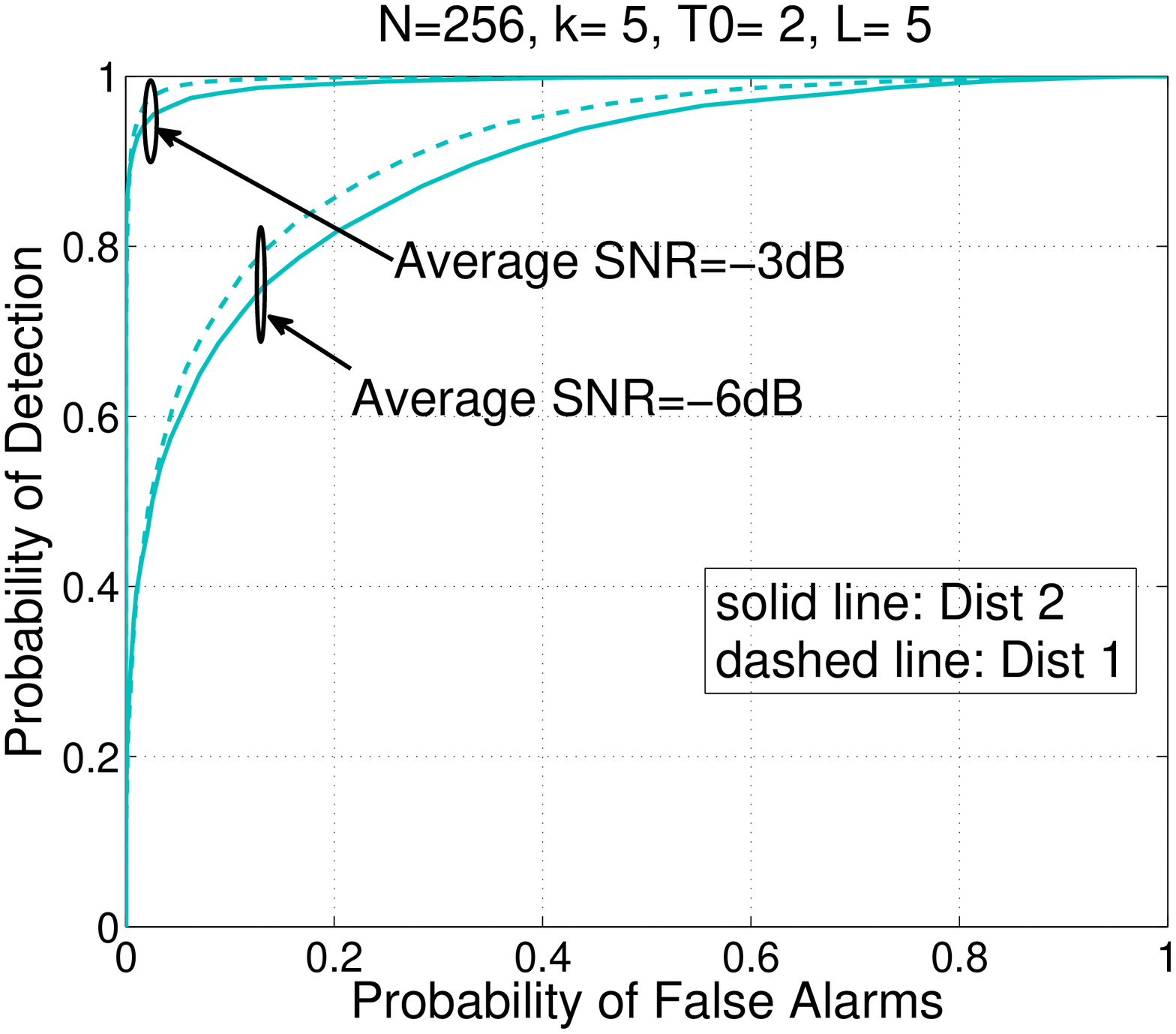}
        \caption{ $T_0=2$, $L=5$}
        \label{fig:tiger}
    \end{subfigure}
 \begin{subfigure}[b]{0.32\textwidth}
        \includegraphics[width=\textwidth]{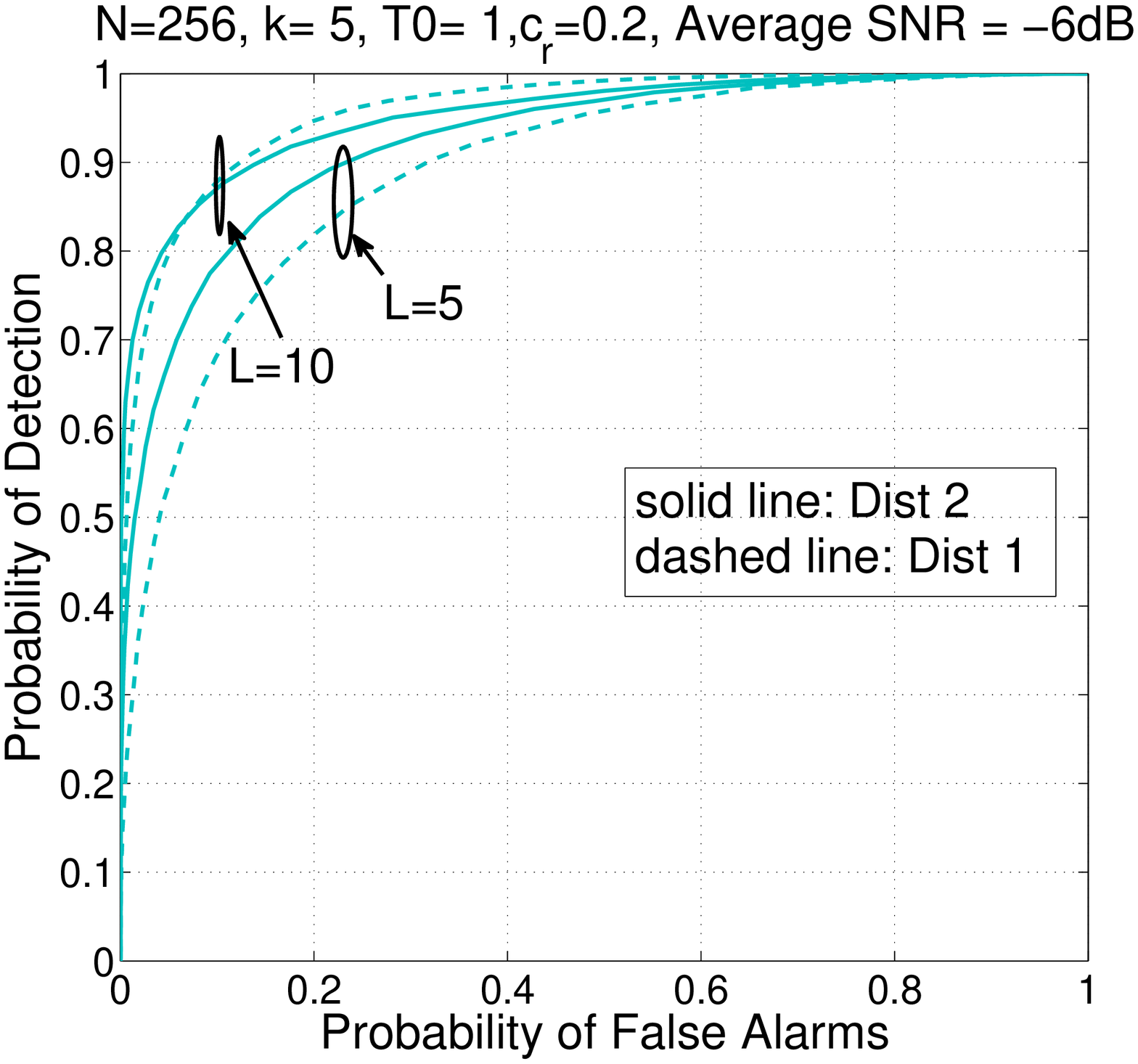}
        \caption{ $T_0=1$, $\mathrm{SNR}=-6dB$}
        \label{fig:tiger}
    \end{subfigure}
    \caption{Performance comparison of two distributed algorithms, $k=5$, $N=256$, $c_r=0.2$}\label{fig:ROC_Dist1_2}
\end{figure}

\subsection{Performance comparison  of two distributed OMP based algorithms}
The first distributed algorithm presented  in Algorithm \ref{algo_dist_0}  requires small  communication overhead compared to the second distributed  algorithm in Algorithm \ref{algo2_1}. In Fig. \ref{fig:ROC_Dist1_2},  we compare the performance of the two distributed algorithms as   $T_0$, SNR values  and $L$ vary. When $T_0=1$, it can be seen that Algorithm \ref{algo2_1} performs better  than Algorithm \ref{algo_dist_0} for both SNR values considered  when $L=5$. However, as seen in Fig. \ref{fig:ROC_Dist1_2}(b), when $T_0$ is increased,  Algorithm \ref{algo_dist_0} performs better  than Algorithm \ref{algo2_1}. The additional communication overhead required by Algorithm \ref{algo2_1}  is not worth it as  $T_0$ increases compared  to Algorithm \ref{algo_dist_0}. Thus, it is seen that the  Algorithm \ref{algo2_1} is promising and worth the additional communication overhead needed  only when  $T_0=1$ and $L$ is relatively small, which are the most desirable scenarios.


\section{Conclusion}\label{sec_conclusion}
In this paper, we discussed the use of a  CS measurement scheme for sparse signal detection  when multiple sparse signals observed by distributed nodes  share the same sparsity pattern. We showed that by estimating only a fraction of the support with less computational power  than that is required for complete signal recovery, a reliable decision statistic can be designed. First, we analyzed the minimum fraction of the support set  to be estimated to achieve a desired detection performance in a centralized setting. Then, OMP based algorithms were developed to jointly  estimate a partial support set and perform detection in centralized as well as distributed settings. It was shown that with each node  estimating only one index of the support set, a reliable detection decision  can be made by appropriate fusion among nodes. Further, when the number of compressed measurements acquired at each node is small, the two distributed algorithms (with less communication overhead) are shown to outperform the centralized algorithm (with higher communication overhead). {In future work, we will show the efficiency and effectiveness of the proposed algorithms with different real world application scenarios (using real experimental data).}

\bibliographystyle{IEEEtran}
\bibliography{IEEEabrv,bib1}

\end{document}